\documentclass[
12pt,
prl,
reprint,
a4paper,
longbibliography,
nofootinbib,
nobibnotes,
superscriptaddress
]{revtex4-2}
\usepackage[utf8]{inputenc}
\usepackage[T1]{fontenc}
\usepackage{hyperref}
\usepackage{amsmath,amssymb,amsthm,amsfonts,mathtools}
\usepackage{multirow}
\usepackage{array}
\usepackage[dvipsnames]{xcolor}

\hypersetup{
    colorlinks = true,
    citecolor=PineGreen,
    linkcolor=MidnightBlue,
    urlcolor=PineGreen
}

\theoremstyle{definition}
\newtheorem{theorem}{Theorem}

\newtheorem{corollary}[theorem]{Corollary}

\newtheorem{lemma}[theorem]{Lemma}

\DeclareMathOperator\tr{Tr}

\def\id{\mathbb{I}}%
\def\var{\mathrm{Var}}
\def\cov{\mathrm{Cov}}

\def\stab{\operatorname{STAB}}
\def\th{\operatorname{TH}}
\def\be{\operatorname{BETA}}

\begin{document}

\author{Zhen-Peng Xu}%
\affiliation{School of Physics, Anhui University, Hefei 230601, China}

\author{Qi Ye}%
\affiliation{Institute for Interdisciplinary Information Sciences, Tsinghua University, Beijing 100084, China}

\author{Jie Wang}%
\affiliation{State Key Laboratory of Mathematical Sciences, Academy of Mathematics and Systems Science, Chinese Academy of Sciences, Beijing, China}

\author{Gereon Ko{\ss}mann}%
\affiliation{Institute for Quantum Information, RWTH Aachen University, 52074 Aachen, Germany}

\author{Andreas Winter}%
\affiliation{Department Mathematik/Informatik---Abteilung Informatik,\protect\\ Universit\"at zu K\"oln, Albertus-Magnus-Platz, 50923 K\"oln, Germany}
\affiliation{ICREA---Instituci\'o Catalana de Recerca i Estudis Avan\c{c}ats,\protect\\ Pg.~Lluis Companys, 23, 08010 Barcelona, Spain} 
\affiliation{ICREA {\&} Grup d'Informaci\'{o} Qu\`{a}ntica, Departament de F\'{\i}sica,\protect\\ Universitat Aut\`{o}noma de Barcelona, 08193 Bellaterra (Barcelona), Spain}
\affiliation{Institute for Advanced Study, Technische Universit\"at M\"unchen,\protect\\ Lichtenbergstra{\ss}e 2a, 85748 Garching, Germany}

\author{Ren\'{e} Schwonnek}%
\affiliation{Institut f\"{u}r Theoretische Physik, Leibniz Universit\"{a}t Hannover, 30167 Hannover, Germany}

\title{Optimal strategies for shadow tomography with limited resources}

\begin{abstract}

Shadow tomography addresses the task of efficiently predicting many expectation values of an unknown quantum state from randomized measurements on comparatively few copies. Existing analyses promise large scaling advantages, but the optimal strategies realizing these guarantees are not always known, and the required measurements are potentially challenging  to implement on current hardware. We address this gap for Pauli observables by computing optimal sample-complexity parameters and constructing optimal measurement strategies under realistic resource constraints. We focus on memoryless protocols, where each copy is measured only once, and on measurements with bounded interaction range. Our approach reduces the problem to the analysis of graph parameters of the frustration graph encoding the Pauli anticommutation relations.
We provide efficient numerical methods for the general case and
analytically prove that Clifford measurements are optimal in many situations. This includes all perfect graphs,  all single-qubit, all two-qubit measurement scenarios, and more. Applied to Hamiltonian energy estimation, our framework yields constructive strategies and improved variance bounds for molecular benchmarks.

\end{abstract}
\date{\today} 

\maketitle

{\it Introduction.---}
\newcommand\change[2]{{\color{gray}#1}#2}%
{The incompatibility} of observable quantities {is} a fundamental characteristic of quantum physics~\cite{Guhne2023Feb}. {It underlies} Bell nonlocality~\cite{Yadavalli2024Dec}, contextuality as a resource for quantum computation~\cite{Howard2014Jun}, and {certifies} randomness in quantum cryptography~\cite{Mannalatha2023Dec,Zhang2025Dec}. At the same time, incompatibility obstructs the extraction of classical information from a quantum system~\cite{Guhne2023Feb}: quantum uncertainty prohibits noiseless joint measurements of incompatible observables~\cite{Xiao2019Sep,schwonnek2018Uncertainty,Ballentine1969ky,busch2016quantum,mao2023testing} and thereby creates an overhead in sample complexity~\cite{Chen2022Jun,Anshu2021Aug}. This overhead is particularly relevant for the resource-efficient use of quantum computers, where each preparation of a circuit-output state can be costly~\cite{Huang2020Oct,Chen2024Optimal,King2025Feb,chen2022exponential}.

Shadow tomography~\cite{Aaronson2020Jun,Huang2020Oct} provides a powerful framework for mitigating this overhead. Its basic idea is to predict many target observables from the measurement data without reconstructing the full quantum state or measuring all observables individually. In recent years, it has become an active research direction, with much attention devoted to scalability and sample complexity~\cite{Nguyen2022Nov,Chen2024Optimal,King2025Feb}. However, concrete optimal strategies adapted to realistic measurement constraints remain limited~\cite{Rozon2024Sep,Chen2024Optimal}. In this work, we address this gap for Pauli observables and show that, in many relevant cases, optimality can already be achieved with efficiently implementable Clifford measurements.

For a general set of observables, a systematic characterization of optimal measurement strategies appears out of reach. We therefore focus on Pauli observables on $n$ qubits, which are central to quantum computation, quantum error correction, and the description of many-body Hamiltonians in quantum simulation~\cite{Gottesman1997Mar,Huang2020Oct,Loizeau2025Mar}. The concrete task for this work is as follows:
For Pauli observables $\mathcal{S}=\{S_1,\ldots,S_v\}$ and a source or circuit that produces an unknown $n$-qubit state $\rho$, find a measurement strategy that allows us to estimate all expectation values $\operatorname{tr}(S_i\rho)$ up to additive error $\epsilon$.
The efficiency of a strategy is quantified by the number $N$ of independently prepared copies of $\rho$ required to reach this accuracy. We consider this task in a memoryless setting~\cite{Huang2021May,Huang2020Oct,chen2022exponential}. That is, each copy of $\rho$ can only be measured once, and no coherent quantum memory across different copies is available. This setting is highly relevant, as it reflects the realistic limitations of existing quantum hardware~\cite{Aaronson2020Jun,Huang2020Oct,chen2022exponential}. An important recent result showed that the sample complexity of optimal memoryless strategies scales as $N \sim \tilde{\Theta}\!\left(\frac{1}{\epsilon^2\,\delta(\{S_i\})}\right)$, governed by a parameter $\delta$~\cite{Chen2024Optimal}. However, the evaluation of $\delta$ remains an open issue, let alone the development of an optimal strategy.

Extending this direction, we further investigate measurement strategies with a restricted interaction range, where entangling measurements can only be performed on blocks of $m$ out of $n$ qubits. This captures typical limitations of near-term quantum-computers~\cite{Preskill2018Aug}, where entangling operations are constrained by connectivity and gate depth. We show that, in such a block-limited setting, described by a partition $\tau$, a similar scaling law holds with a parameter $\delta_\tau$.

In this work, we provide strategies that attain this scaling. A crucial step is to develop reliable and efficient methods for computing these parameters on concrete instances. Our central object is the graph $G$ that encodes the anticommutativity relations in a given set $\mathcal{S}$ of Pauli strings.
The basic intuition is that this graph precisely captures the obstruction to extracting several Pauli expectation values from the same copy. The complexity parameter $\delta$ is closely linked to a graph invariant, $\beta(G)$, introduced in our prior work~\cite{Xu2024Apr,xu2025simultaneous}. Despite finding optimal strategies for the general case via see-saw methods and hierarchies of semidefinite programs, instances for which this graph parameter coincides with the independence number of $G$ play a special role. In a companion work, we call them $\hbar$-perfect graphs~\cite{xu2025simultaneous}.
We find that, in this case, the optimal strategy can be realized by Clifford measurements, i.e., by Clifford unitaries followed by measurements in the computational basis.
We furthermore show that Clifford measurements are also optimal when at most two qubits can be measured jointly, with single-qubit measurements as special cases.

These results are important from the perspective of realistic resources. On NISQ devices~\cite{Preskill2018Aug,Bharti2022Feb}, Clifford measurements can usually be implemented as native operations on a fundamental hardware level. In stabilizer error correction, they coincide with the natural measurement primitives and can be implemented without magic.
As a demonstration, we construct measurement strategies for the Hamiltonians of several simulated molecules. A comparison with the best reported method in the literature~\cite{McNulty2023Mar} shows that our results can improve performance by between 20\% and 200\%. 

\subsection*{Computing the Sample Complexity}

{\it Frustration graphs, beta numbers, and $\hbar$-perfectness.---}
A Pauli observable~\cite{Nielsen2010Dec} on $n$ qubits is given by a tensor product $S=\bigotimes_{k=1}^{n} \sigma_{\mu_k}$ of Pauli matrices, i.e. $\sigma_{\mu_k}\in\{\id,\sigma_x,\sigma_y,\sigma_z\}$. 
Any two Pauli observables either commute or anticommute. For a given set of Pauli observables $\{S_i\}$, its frustration graph $G$ has a vertex $i$ for every observable $S_i$ and an edge $(i,j)$ if and only if $S_i$ anticommutes with $S_j$. 
Conversely, for every graph $G$, we can find sets of Pauli observables with $G$ as the frustration graph~\cite{Chapman2020Jun}. We call these sets realizations of $G$, which are generally not unique.   
In  \cite{Xu2024Apr} we show that the quantity 
\begin{align}\label{eq:def-beta}
 \beta(G,w)=\max_{\rho} \sum\nolimits_i w_i \langle S_i\rangle_{\rho}^2,
\end{align}
with weights $w_i\geq0$, takes the same value for any realization of $G$. This makes it a graph parameter, we refer to it as the weighted beta number. The parameter  $\beta(G,w)$ is lower-bounded  by the weighted independence number $\alpha(G,w)$ and upper-bounded by the weighted Lov\'asz number $\vartheta(G,w)$ \cite{Xu2024Apr,deGois2023Jun,Hastings2022Jun}. 

A graph $G$ is said to be $\hbar$-perfect if $\beta(G,w) = \alpha(G,w)$ for any non-negative weight vector $w$.
The stabilizer set polytope $\stab(G)$ is the set of points $u$ such that $\sum_i w_i u_i \le \alpha(G,w)$ for every non-negative weight vector $w$. Similarly, $\be(G)$ is the set with upper bound $\beta(G,w)$ and $\th(G)$ is the set with upper bound $\vartheta(G,w)$. Hence, $\stab(G)\subseteq \be(G) \subseteq \th(G)$, and a graph $G$ is $\hbar$-perfect precisely when $\be(G)=\stab(G)$.

{\it Sample complexity parameter.---}
The very first step in an optimal shadow tomography strategy is to determine the sample complexity parameter~\cite{Chen2024Optimal}, which can be expressed in terms of the weighted beta number as 
\begin{equation}\label{eq:reform}
  \delta(\{S_i\}) = \min_{w\in{\cal D}} \beta(G,w),
\end{equation}
where $\mathcal{D}$ is the set of probability vectors.
Hence, the sample complexity parameter $\delta(\{S_i\})$ is in fact identical across all realizations of a frustration graph and by this a graph parameter itself. 
This has immediate consequences for computing $\delta(\{S_i\})$. Instead of $\{S_i\}$, we can consider a new set of Pauli observables $\{S_i^{\min}\}$ sharing the same frustration graph but with a minimal amount of qubits. This set can be actually be constructed efficiently~\cite{Samoilenko,Xu2024Apr}. 
As we shall see, we can apply the same reduction in the construction of shadow tomography strategies, which can lower the computational cost.

The reformulation in Eq.~\eqref{eq:reform} does not yet provide an efficient method to compute $\delta$ directly, since evaluating $\beta(G,w)$ generally requires a semidefinite programming (SDP) hierarchy~\cite{Moran2024May,xu2025simultaneous} followed by the minimization over the weight vector $w$. Nevertheless, %
there is an SDP hierarchy for $\delta(\{S_i\})$, which turns out to be efficient for small frustration graphs:  
Let $\lambda_r(G,w)$ denote the optimal value of the $r$-th level in the complete hierarchy converging to $\beta(G,w)$. We prove that the feasible set of non-negative weight vectors $w$ satisfying $\lambda_r(G,w)\le 1$ can be characterized by an SDP. Let $\omega_r$ denote the maximal value of $\sum_i w_i$ over feasible $w$. Then $1/\omega_r$ approximates the sample complexity parameter $\delta(\{S_i\})$ as $r$ goes to infinity, since $\beta(G,w)$ is a homogeneous function of degree $1$ in $w$.
More details on the derivation and performance of this SDP hierarchy are provided in Supplemental Material (SM) Sec.~\ref{sec:A}. 

When the frustration graph has more than $10$ vertices, this hierarchy becomes heavy after the second level. Nevertheless, the formulation in Eq.~\eqref{eq:reform} enables us to bound $\delta(\{S_i\})$ by two graph parameters computed by either a linear program or an SDP.
\begin{theorem}\label{thm:global}
  For a given set of Pauli observables $\{S_i\}$ with frustration graph $G$, it holds that
  \begin{equation}
    \delta(\{S_i\}) \in [1/\chi_f(G), 1/\vartheta(\bar{G})],
  \end{equation}
  where $\chi_f(G)$ is the fractional chromatic number of $G$, and $\vartheta(\bar{G})$ is the Lov\'asz number of the complement graph $\bar{G}$. Whenever $G$ is $\hbar$-perfect, the lower bound is exact and Clifford measurements are optimal.
\end{theorem}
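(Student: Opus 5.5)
The plan is to work entirely with the reformulation Eq.~\eqref{eq:reform}, $\delta=\min_{w\in\mathcal D}\beta(G,w)$, and to sandwich $\beta(G,w)$ between $\alpha(G,w)$ and $\vartheta(G,w)$ as stated earlier. Monotonicity of the minimum then gives
\begin{equation}
\min_{w\in\mathcal D}\alpha(G,w)\le\delta(\{S_i\})\le\min_{w\in\mathcal D}\vartheta(G,w),
\end{equation}
and what remains is to identify the two outer quantities.

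For the lower bound, I use LP duality. $\min_{w\in\mathcal D}\alpha(G,w)$ is the value of a two-player zero-sum game: the minimizer picks a probability vector $w$ on vertices, and the maximizer picks an independent set $I$, with payoff $w(I)$. By von Neumann's minimax theorem, this equals the maximum over distributions $p$ on independent sets of $\min_i \Pr_p[i\in I]$. Normalizing, the largest achievable uniform coverage probability $t$ satisfies $1/t=\chi_f(G)$, by the standard LP definition of the fractional chromatic number. Hence $\min_w\alpha(G,w)=1/\chi_f(G)$.

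For the upper bound, I need $\min_{w\in\mathcal D}\vartheta(G,w)=1/\vartheta(\bar G)$. I would use Lovász's duality $\th(\bar G)=\th(G)^{*}$ (the antiblocker), in the form $\vartheta(G,w)=\max\{w\cdot u: u\in\th(G)\}$. The minimax over $w\in\mathcal D$ and the convex compact set $\th(G)$ gives $\max_{u\in\th(G)}\min_i u_i$. By symmetry and convexity of $\th(G)$ under the relevant structure, or more directly by scaling, this equals $\max\{t: t\mathbf 1\in\th(G)\}$. By antiblocking, $t\mathbf 1\in\th(G)$ iff $t\,\mathbf 1\cdot y\le1$ for all $y\in\th(\bar G)$, i.e.\ iff $t\le1/\vartheta(\bar G,\mathbf 1)=1/\vartheta(\bar G)$. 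This identification is the step I expect to need the most care. The reduction of $\max_u\min_i u_i$ to the diagonal point uses down-monotonicity of $\th(G)$: if $u\in\th(G)$ with $\min_i u_i=t$, then $t\mathbf 1\le u$ componentwise, so $t\mathbf 1\in\th(G)$. I must also check that the minimax exchange is legitimate, which holds because the payoff is bilinear and both sets are compact and convex.

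For the $\hbar$-perfect case, $\beta(G,w)=\alpha(G,w)$ for all $w$, so the lower bound chain is tight and $\delta=1/\chi_f(G)$. For optimality of Clifford measurements, I take the optimal distribution $p$ on independent sets from the LP. An independent set in the frustration graph corresponds to a set of pairwise commuting Pauli observables. These can be measured jointly by a Clifford unitary that maps them to $Z$-type strings, followed by a computational-basis measurement. The protocol samples $I\sim p$ per copy and measures the commuting family $\{S_i\}_{i\in I}$ jointly. Each $S_i$ is then measured on a fraction at least $1/\chi_f(G)$ of the copies, with single-shot variance at most $1$. This gives $N=O(\chi_f(G)\log v/\epsilon^2)$ by Hoeffding's inequality and a union bound, which matches the $\tilde\Theta(1/(\epsilon^2\delta))$ lower bound of \cite{Chen2024Optimal}.
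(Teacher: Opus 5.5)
Your proposal is correct and follows essentially the same route as the paper: insert $\alpha(G,w)\le\beta(G,w)\le\vartheta(G,w)$ into $\delta=\min_{w\in\mathcal D}\beta(G,w)$, identify the two outer minima via LP duality for $\chi_f(G)$ and the antiblocker duality between $\th(G)$ and $\th(\bar G)$, and use Clifford measurements in the $\hbar$-perfect case. The differences are cosmetic. You use a minimax argument (and, for the upper bound, the diagonal point $t\mathbf 1\in\th(G)$), whereas the paper uses degree-one homogeneity to write each minimum as $1/\max\{\sum_i w_i\}$ over the unit level set; and you spell out the Clifford protocol, whereas the paper cites the known Clifford result here and gives the construction later in the supplement.
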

The detailed proof is given in SM Sec.~\ref{sec:B} and follows from the formulation of $\delta(\{S_i\})$ in Eq.~\eqref{eq:reform} after replacing $\beta(G,w)$ with its lower bound $\alpha(G,w)$ and upper bound $\vartheta(G,w)$.
Since for any $\hbar$-perfect graph $G$, $\delta(\{S_i\}) = 1/\chi_f(G)$, and this value is achieved by Clifford measurements~\cite{Chen2024Optimal,King2025Feb}, it follows directly that Clifford measurements suffice for optimal shadow tomography. 

For $\hbar$-imperfect graphs, we can employ $\hbar$-perfect subgraphs to tighten $\lambda_r(G,w)$, thereby improving the performance of the hierarchy.
Symmetry in $G$ offers another way to reduce the computational cost. Since $\beta(G,w)$ is convex in $w$, we have $\beta(G,\bar{w}) \le \beta(G,w)$ where $\bar{w}$ is the symmetrization of $w$ over the automorphism group of $G$. Hence, it suffices to consider symmetric weight vectors in the hierarchy, which reduces the problem size. In particular, $\delta(\{S_i\}) = {\beta(G)}/{n}$ when $G$ is vertex-transitive, i.e., any two vertices are related by an automorphism. For example, the anticycle $\bar{C}_7$ on $7$ vertices is $\hbar$-imperfect with $\beta(\bar{C}_7) = (9+4\sqrt{2})/7$, giving $\delta(\{S_i\}) \approx 0.29912$ for any $\{S_i\}$ realizing $\bar{C}_7$.

\subsection*{Block local measurements}
As the system size grows, global entangled measurements even on a single copy of the target state become challenging in most quantum computing architectures~\cite{Baumer2021Jul,Bianchi2026Apr,Welte2021Jul}. For photonic systems~\cite{Zhang2025Oct,AghaeeRad2025Feb} this requires a lot of magic~\cite{Raussendorf2001May,Bravyi2005Feb}.  Ion based computers need many ion movements~\cite{Kielpinski2002Jun,Pino2021Apr}. Superconducting computers~\cite{you2003quantum}, which have a fixed interaction layout, can only realize global entanglement by long gate sequences. 

We therefore now 
consider the scenario where at least $k$ local measurements, each on at most $m$ qubits, can cover the $n$ qubits in the entire system.
For a given partition $\tau=({\rm part}_1,\ldots, {\rm part}_k)$, the considered measurements take the form
$\{M_1^{(o_1)}\otimes\cdots\otimes M_k^{(o_k)}\}_{o_1,\ldots,o_k}$, where $M_i$ acts on the qubits in ${\rm part}_i$ and $M_i^{(o_i)}$ is the POVM element for the outcome $o_i$.
Let $\delta_{\tau}(\{S_i\})$ denote the optimal sample complexity for global observables $\{S_i\}$ under partition $\tau$. %
\begin{lemma}
  For a given partition $\tau$, it holds that%
  \begin{equation}\label{eq:local}
    \delta_\tau(\{S_i\}) = \min_{w\in {\cal D}} \max_{\rho \in {\rm PD}_\tau} \sum\nolimits_i w_i \langle S_i\rangle_{\rho}^2,
  \end{equation}
where $\mathrm{PD}_\tau$ is the set of product states with respect to $\tau$.
\end{lemma}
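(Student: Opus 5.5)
The plan is to redo the memoryless characterization of \cite{Chen2024Optimal} with the measurement set restricted to $\tau$-local POVMs, and then repeat the minimax and state-reduction steps behind Eq.~\eqref{eq:reform}. In \cite{Chen2024Optimal} the optimal memoryless sample complexity for Pauli observables is governed, up to logarithmic factors, by
\begin{equation*}
\delta_{\mathcal{M}}=\max_{M\in\mathcal{M}}\min_i \frac{1}{2^n}\sum_o \frac{\tr(S_iM^{(o)})^2}{\tr M^{(o)}}.
\end{equation*}
Here $\mathcal{M}$ is the set of allowed POVMs; it is all POVMs in the unrestricted case. The quantity $\frac{1}{2^n}\sum_o \tr(S_iM^{(o)})^2/\tr M^{(o)}$ is the information a single outcome carries about $\tr(S_i\rho)$ near the maximally mixed state. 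Both the lower-bound argument (a Le Cam/Fano-type construction with perturbations $\id/2^n\pm\epsilon S_i/2^n$) and the upper bound (repeatedly applying a mixed optimal POVM and using unbiased estimators) use only the fact that each copy is measured by some element of $\mathcal{M}$. I would therefore first check that the argument goes through verbatim with $\mathcal{M}$ replaced by the set $\mathcal{M}_\tau$ of POVMs of the form $\{M_1^{(o_1)}\otimes\cdots\otimes M_k^{(o_k)}\}$. This gives $\delta_\tau=\delta_{\mathcal{M}_\tau}$, provided $\mathcal{M}_\tau$ is closed under classical mixing. It is: a randomized choice of $\tau$-local POVMs is again a $\tau$-local POVM once the outcome label records which POVM was chosen.

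Second, I would exchange max and min. Write each element as $M^{(o)}=p_o\,2^n\sigma_o$ with $p_o=\tr M^{(o)}/2^n$ and $\sigma_o$ a state. The objective then becomes $\sum_o p_o\langle S_i\rangle_{\sigma_o}^2$, and the POVM condition gives $\sum_o p_o=1$. Mixing POVMs mixes these quantities linearly, so the set of achievable vectors $\big(\sum_o p_o\langle S_i\rangle_{\sigma_o}^2\big)_i$ over $\mathcal{M}_\tau$ is convex. Writing $\min_i$ as $\min_{w\in\mathcal{D}}\sum_i w_i(\cdot)$ and applying von Neumann's minimax theorem (compactness comes from bounded entries and a finite outcome count via Carath\'eodory) gives
\begin{equation*}
\delta_\tau=\min_{w\in\mathcal{D}}\ \max_{M\in\mathcal{M}_\tau}\sum_o p_o\sum_i w_i\langle S_i\rangle_{\sigma_o}^2 .
\end{equation*}

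Third, I would reduce the inner maximum over POVMs to a maximum over product states. For a $\tau$-local POVM, every normalized element $\sigma_o$ is a product state with respect to $\tau$. The inner objective is therefore an average of $f_w(\sigma)=\sum_i w_i\langle S_i\rangle_\sigma^2$ over product states, hence at most $\max_{\rho\in\mathrm{PD}_\tau}f_w(\rho)$. For the reverse inequality, fix a maximizer $\rho=\rho_1\otimes\cdots\otimes\rho_k$ and build a $\tau$-local POVM all of whose normalized elements give the same value of $f_w$. On each block, take $M_j^{(P)}=\frac{1}{4^{m_j}}P\rho_jP$ with $P$ ranging over the Pauli strings on ${\rm part}_j$, where $m_j=|{\rm part}_j|$. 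Twirling over all Paulis gives $\sum_P P\rho_jP=2^{m_j}\id$, so these elements sum to the identity. The tensor product over blocks is $\tau$-local. Each normalized element $P\rho P$ satisfies $\langle S_i\rangle_{P\rho P}=\pm\langle S_i\rangle_\rho$, because Pauli strings commute or anticommute, and the squares are unchanged. So this POVM attains $f_w(\rho)$, which yields Eq.~\eqref{eq:local}.

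I expect the main obstacle to be the first step: justifying that the sample-complexity characterization of \cite{Chen2024Optimal} holds for an arbitrary restricted, mixing-closed measurement class, with $\delta$ defined exactly as above. In particular, the lower bound must not implicitly use global measurements when it constructs its hard instances. I would handle this by noting that the hard instances are states, not measurements, and the information bound is applied copy by copy to whatever POVM is used. The minimax and twirling steps are then routine.
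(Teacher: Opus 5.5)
Your argument is correct, and its core (your third step) is the paper's proof. The upper bound holds because every normalized element of a $\tau$-local POVM is a product state and $\sum_o p_o=1$, and the lower bound comes from the Pauli-conjugated POVM $\{Q\rho Q/2^n\}_Q$; your first two steps just make explicit the extension of \cite{Chen2024Optimal} to a restricted measurement class and the max--min exchange, which the paper takes for granted by starting directly from the $\min_w\max_M$ form.

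Two small fixes are needed. First, the block elements must be $2^{-m_j}P\rho_jP$ rather than $4^{-m_j}P\rho_jP$, since $\sum_P P\rho_jP=2^{m_j}\id$. Second, a randomized choice of tensor-product POVMs is a POVM with product elements, not literally a tensor product of local POVMs, so the class in your minimax should be this mixing closure. That closure is what a memoryless protocol realizes copy by copy, and both directions of your third step still hold for it.
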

The proof is given in SM Sec.~\ref{sec:C}. The same result holds without fixing a partition. Let $\delta_{k,m}$ denote the corresponding sample complexity parameter. Then the set of states to optimize over is $\cup_{\tau \in {\cal T}_{k,m}} {\rm PD}_\tau$, where ${\cal T}_{k,m}$ is the set of all partitions into $k$ parts, each of size at most $m$.

Consider the case $k=2$ with a fixed bipartition $\tau$, and assume that the observable factorizes as $S_i = T_i\otimes Q_i$ under this bipartition. The local Pauli observables for each part generally exhibit commutation and anticommutation relations that are different from those of the global Pauli observables.
Let $G_1$ and $G_2$ denote the frustration graphs of $\{T_i\}$ and $\{Q_i\}$.
Notice that the frustration graph $G$ of $\{S_i\}$ is the XOR edge union of $G_1$ and $G_2$, i.e., $(i,j)$ is an edge of $G$ if and only if it is an edge of either $G_1$ or $G_2$ but not both.
Then the sample complexity parameter $\delta_{\tau}(\{S_i\})$ corresponding to the partition $\tau$ is determined by $G_1$ and $G_2$ only. More explicitly,
\begin{align}\label{eq:bipartition}
  \delta_{\tau}(\{S_i\}) &= \min_w \max_{\sigma_1,\sigma_2} \sum\nolimits_i w_i \langle T_i\rangle_{\sigma_1}^2 \langle Q_i\rangle_{\sigma_2}^2\nonumber\\
           &= \min_w \max_{t\in \be(G_1)} \max_{q\in \be(G_2)} \sum\nolimits_i w_i t_i q_i,
\end{align}
where the first equality is from Eq.~\eqref{eq:local} and the second equality is from the definition of $\be$. 

In the case where $G_1$ and $G_2$ are $\hbar$-perfect, $\be(G_i) = \stab(G_i)$ for $i=1,2$, which are polytopes \cite{Xu2024Apr}.
For fixed $w$ and $q$, the optimal value of $\max_{t\in \stab(G_1)} w_i t_i q_i$ can always be obtained by a vertex of $\stab(G_1)$, which is the index vector of a maximal independent set of $G_1$.
A similar argument holds also for $q$. Hence, we only need to consider those $t$ and $q$ which are both index vectors corresponding to the independent sets $I_1$ in $G_1$ and $I_2$ in $G_2$, respectively.
Then the entrywise product $t\odot q$ is the index vector of $I_1\cap I_2$.

We claim that the set $\{I_1\cap I_2| I_1\in {\cal I}(G_1), I_2\in {\cal I}(G_2)\}$, after the removal of subsets of each element in it, is the set of maximal independent sets 
of $G' := G_1 \cup G_2$, where $\cup$ means the union of sets of edges. 
Firstly, for any two nodes in $I_1\cap I_2$, they are disconnected both in $G_1$ and $G_2$, hence they are still nonadjacent in $G'$.
Secondly, for any independent set of $G'$, it is automatically an independent set shared by $G_1$ and $G_2$ since they contain subsets of edges in $G'$.
This claim implies that 
  $\delta_{\tau}(\{S_i\}) = \min_{w} \max_{v \in \stab(G')} \sum_i w_i v_i$ which is $1/\chi_f(G')$ consequently.
  We remark that $\delta_{\tau}(\{S_i\})$  holds still if $G_1$ is $\hbar$-perfect, and the induced subgraph $G_2(I)$ is also $\hbar$-perfect for any maximal independent set $I\in {\cal I}(G_1)$. In the case that both $G_1$ and $G_2$ are $\hbar$-imperfect, $1/\chi_f(G')$ might only provide a lower bound for $\delta_{\tau}(\{S_i\})$.

  In comparison, $1/\vartheta(\bar{G'})$ always works as an upper bound of $\delta_{\tau}(\{S_i\})$, which is obtained by relaxing $\be(G_i)$ to $\th(G_i)$ in Eq.~\eqref{eq:bipartition}. By definition, a vector $u\in \th(G)$ if there is a state $|s\rangle$ and measurement directions $\{|v_i\rangle$\} to be an orthogonal representation of $G$, i.e., $\langle v_i|v_j\rangle = 0$ for $(i,j)$ to be an edge of $G$, such that $u_i = |\langle v_i|s\rangle|^2$. 
  For any $t\in \th(G_1)$ and $q \in \th(G_2)$, denote by $\{|u_{1,i}\rangle\}$ and $\{|u_{2,i}\rangle\}$ the corresponding orthogonal representations and by $|s_1\rangle$ and $|s_2\rangle$ the corresponding states. Then $\{|u_{1,i}\rangle\otimes |u_{2,i}\rangle\}$ is automatically an orthogonal representation of $G'$, which implies that $t\odot q \in \th(G')$ since $|(\langle u_{1,i}|\otimes\langle u_{2,i}|)(|s_1\rangle\otimes|s_2\rangle)|^2 = t_iq_i$. %
  Consequently, $\delta_{\tau}(\{S_i\}) \le \min_w \max_{s \in \th(G')} \sum_i w_i s_i$, which can be proven to be $1/\vartheta(\bar{G'})$.

  A general result can be proven in SM Sec.~\ref{sec:C}.
  \begin{theorem}\label{thm:local}
  For a given partition $\tau$ and Pauli observables $\{S_i = \otimes_{j=1}^k S_{i,j}\}$ with $S_{i,j}$ acting on the $j$-th part, 
  \begin{align}
    \delta_{\tau}(\{S_i\}) %
                           &\in [1/\chi_f(G^{(\tau)}), 1/\vartheta(\bar{G}^{(\tau)})],
  \end{align}
  where $G^{(\tau)} = \cup_{j=1}^k G_j$ and $G_j$ is the frustration graph of $\{S_{i,j}\}_i$. The lower bound is tight if either all $G_j$'s are $\hbar$-perfect, or $G^{(\tau)}$ is perfect.
\end{theorem}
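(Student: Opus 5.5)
The plan is to generalize the bipartite argument given just before the statement to $k$ parts, starting from the Lemma, Eq.~\eqref{eq:local}. For a product state $\rho=\sigma_1\otimes\cdots\otimes\sigma_k$ we have $\langle S_i\rangle_\rho^2=\prod_{j}\langle S_{i,j}\rangle_{\sigma_j}^2$. The vector $(\langle S_{i,j}\rangle_{\sigma_j}^2)_i$ ranges exactly over the points of $\be(G_j)$ realized by states. Since the objective is linear in each factor separately, the maximum over states equals the maximum over $\be(G_j)$, whose support function is $\beta(G_j,\cdot)$. This gives
\begin{equation*}
\delta_\tau(\{S_i\})=\min_{w\in\mathcal D}\max_{t^{(j)}\in\be(G_j)}\sum_i w_i\prod_{j=1}^k t^{(j)}_i .
\end{equation*}
The proof then reduces to two facts about the set $P:=\{t^{(1)}\odot\cdots\odot t^{(k)}\}$ of entrywise products: first $\stab(G^{(\tau)})\subseteq \operatorname{conv}P$ (up to down-monotonicity), and second $P\subseteq\th(G^{(\tau)})$.

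\emph{Lower bound.} Take an independent set $I$ of $G^{(\tau)}=\cup_j G_j$. Then $I$ is independent in each $G_j$, so $1_I\in\stab(G_j)\subseteq\be(G_j)$ for every $j$, and the product of these $k$ copies is $1_I\in P$. Hence the inner maximum is at least $\alpha(G^{(\tau)},w)$. It remains to show $\min_{w\in\mathcal D}\alpha(G,w)=1/\chi_f(G)$. This is LP duality: the fractional chromatic number is the minimal total weight of a fractional cover of the vertices by independent sets, and the dual is $\max\{\sum_i w_i : \alpha(G,w)\le 1\}$. Homogeneity of $\alpha(G,\cdot)$ then gives the identity. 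The same computation was already used for Theorem~\ref{thm:global}, so I will simply invoke it.

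\emph{Upper bound.} I will prove $P\subseteq\th(G^{(\tau)})$ using the orthogonal-representation description of $\th$ and the bound $\be(G_j)\subseteq\th(G_j)$. For each $j$, pick an orthonormal representation $\{|u_{j,i}\rangle\}$ of $G_j$ and a handle $|s_j\rangle$ with $t^{(j)}_i=|\langle u_{j,i}|s_j\rangle|^2$. The tensor products $\bigotimes_j|u_{j,i}\rangle$ form an orthonormal representation of $\cup_jG_j$, because a single orthogonal factor makes the whole product orthogonal. With handle $\bigotimes_j|s_j\rangle$ this gives $\prod_j t^{(j)}_i$, so the product lies in $\th(G^{(\tau)})$. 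Then $\delta_\tau\le\min_w\vartheta(G^{(\tau)},w)$. Finally, $\min_{w\in\mathcal D}\vartheta(G,w)=1/\vartheta(\bar G)$ follows from the antiblocker duality $\th(\bar G)=\{w\ge0:\sum_i w_iu_i\le1\ \forall u\in\th(G)\}$: maximizing $\sum_i w_i$ over $\th(\bar G)$ gives $\vartheta(\bar G)$, and homogeneity gives the identity. One point needs care: $\th$ must be taken with possibly non-unit handles and down-monotone closure so that the antiblocker statement applies. I expect this bookkeeping to be routine.

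\emph{Tightness.} This is the main obstacle.
\begin{itemize}
\item If every $G_j$ is $\hbar$-perfect, then $\be(G_j)=\stab(G_j)$. The objective is multilinear, so the maximum is attained at vertices, namely indicator vectors $1_{I_j}$ with $I_j$ independent in $G_j$. Their product is $1_{\cap_j I_j}$, and $\cap_j I_j$ is independent in the union graph. So the maximum equals $\alpha(G^{(\tau)},w)$, which matches the lower bound.
\item If $G^{(\tau)}$ is perfect, then $\th(G^{(\tau)})=\stab(G^{(\tau)})$ by Lov\'asz's theorem. Equivalently $\vartheta(\bar G^{(\tau)})=\chi_f(G^{(\tau)})$, since perfection is closed under complementation. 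The two bounds therefore coincide.
\end{itemize}
The delicate step is really the upper-bound inclusion $\be(G_j)\subseteq\th(G_j)$ in the orthogonal-representation form. I will cite it from \cite{Xu2024Apr}, where $\beta\le\vartheta$ is shown, and pass from the inequality on support functions to set inclusion via convexity and down-monotonicity of both bodies.
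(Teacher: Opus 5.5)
Your proposal is correct and follows essentially the same route as the paper: the multilinear reformulation over the bodies $\be(G_j)$, the vertex and intersection argument when all $G_j$ are $\hbar$-perfect, the tensor-product orthogonal representation giving $\bigodot_j t_j\in\th(G^{(\tau)})$ together with the duality $\min_{w}\vartheta(G,w)=1/\vartheta(\bar G)$, and $\th=\stab$ when $G^{(\tau)}$ is perfect. Your treatment of the general lower bound, via $1_I\in\stab(G_j)\subseteq\be(G_j)$ for every independent set $I$ of $G^{(\tau)}$, is in fact cleaner than the paper's: its lower-bound paragraph is phrased inside the $\hbar$-perfect case and leaves this inclusion implicit in its claim (ii).
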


\begin{corollary}\label{cor:capgs}
  For any partition $\tau$ into pairs of qubits, sample complexity parameter $\delta_{\tau} = 1/\chi_f(G^{(\tau)})$, and local Clifford measurements suffice for the optimal strategy.
\end{corollary}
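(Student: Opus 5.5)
The plan is to reduce the corollary to Theorem~\ref{thm:local}. Its lower bound $1/\chi_f(G^{(\tau)})$ is tight as soon as every local frustration graph $G_j$ is $\hbar$-perfect. So the main task is to show that \emph{every} frustration graph realized by two-qubit Pauli observables is $\hbar$-perfect.

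Optimality of local Clifford measurements then follows the reasoning after Theorem~\ref{thm:global}. The maximization over $\be(G_j)=\stab(G_j)$ in the block version of Eq.~\eqref{eq:bipartition} is attained at vertices, which are independent sets, that is, commuting subsets of $\{S_{i,j}\}_i$. Each such subset can be measured jointly on the $j$-th pair by a two-qubit Clifford unitary followed by a computational-basis measurement. A product of these, randomized according to the optimal fractional colouring of $G^{(\tau)}$, reproduces the strategy of~\cite{Chen2024Optimal,King2025Feb} with parameter $1/\chi_f(G^{(\tau)})$.

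\emph{Reductions.} First, an identity factor on a pair commutes with everything, so it is an isolated vertex, and repeated Paulis are non-adjacent twins. In both cases $\beta(G,w)$ and $\alpha(G,w)$ are unchanged after merging the weights of twins. Also, an induced subgraph is the same as setting some weights to zero. Hence it suffices to prove $\beta(G_{15},w)=\alpha(G_{15},w)$ for all $w\ge 0$, where $G_{15}$ is the frustration graph of all $15$ nontrivial two-qubit Paulis. Equivalently, for every pure state the vector $(\langle P\rangle^2)_P$ must lie in $\stab(G_{15})$; mixed states follow by convexity of $\sum_P w_P\langle P\rangle^2$.

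Second, I would use the standard identification of the $15$ Paulis with products $i\gamma_a\gamma_b$, $1\le a<b\le 6$, of six Majorana (Dirac) operators on $\mathbb{C}^4$. Two such products anticommute iff the index pairs share exactly one index. Thus $G_{15}\cong L(K_6)$, and its independent sets are exactly the matchings of $K_6$. The $15$ perfect matchings are precisely the $15$ two-qubit stabilizer groups. Consequently $\stab(G_{15})$ is the matching polytope of $K_6$. By Edmonds' theorem this polytope is cut out by $x\ge0$, the degree constraints $\sum_b x_{ab}\le 1$, and the odd-set constraints $\sum_{a,b\in U}x_{ab}\le (|U|-1)/2$ for $|U|\in\{3,5\}$.

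\emph{Verifying the inequalities.} Set $x_{ab}=\langle i\gamma_a\gamma_b\rangle_\psi^2$. The degree constraints come from a star $\{i\gamma_a\gamma_b\}_{b\neq a}$. The $|U|=3$ constraints come from a triangle $\{i\gamma_a\gamma_b, i\gamma_b\gamma_c, i\gamma_a\gamma_c\}$. Both are sets of pairwise anticommuting Paulis, for which $\beta$ of a clique equals $1$, i.e.\ $\sum\langle A_k\rangle^2\le 1$. The $|U|=5$ constraint is the one I expect to be the main obstacle: I need $\sum_{a<b\le 5}x_{ab}\le 2$.

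The first attempt is to use the pure-state identity $\sum_P\langle P\rangle^2=3$. This reduces the claim to the lower bound $\sum_{a\le5}\langle i\gamma_a\gamma_6\rangle^2\ge 1$ for pure states. I would try to prove this via the Cl(5) structure: the five operators $i\gamma_a\gamma_6$ anticommute pairwise and their product is $\pm\id$ (up to phase), which should force the squared expectations to sum to exactly $1$ on pure states. This can be checked via the $SU(4)\cong \mathrm{Spin}(6)$ picture, where a pure state corresponds to a null vector in $\mathbb{C}^6$. If that route fails, I would fall back on a finite verification. By Clifford symmetry (which acts transitively on odd $5$-sets of $K_6$) only one such inequality needs checking, and it can be certified by a small SDP or a sum-of-squares certificate on $\mathbb{C}^4$.

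Once $G_{15}$ is shown $\hbar$-perfect, all $G_j$ are $\hbar$-perfect, and Theorem~\ref{thm:local} gives $\delta_\tau=1/\chi_f(G^{(\tau)})$.
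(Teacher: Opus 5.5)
Your proof is correct, and its outer structure is the paper's. The corollary follows from the tightness clause (i) of Theorem~\ref{thm:local} once every frustration graph of Pauli observables on at most two qubits is $\hbar$-perfect. The local Clifford strategy is the fractional-colouring construction of SM Sec.~\ref{sec:D}.

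You differ on the key lemma. The paper does not prove $\hbar$-perfectness of two-qubit graphs; it cites it from~\cite{xu2025simultaneous}. You derive it in three steps:
\begin{enumerate}
\item You reduce to the full graph $G_{15}\cong L(K_6)$, handling isolated vertices, non-adjacent twins and induced subgraphs.
\item You identify $\stab(G_{15})$ with the matching polytope of $K_6$.
\item You verify Edmonds' facet inequalities on pure states. Stars and triangles are the maximal cliques, so the degree and $|U|=3$ constraints are just clique bounds.
\end{enumerate}
This makes the corollary self-contained and explains why two qubits are special: the $\mathfrak{su}(4)\cong\mathfrak{so}(6)$ structure turns independent sets into matchings. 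The cost is importing Edmonds' theorem. The paper's citation is shorter but gives no such insight.

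Three details need tidying.

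\emph{The Majorana picture.} Six pairwise anticommuting involutions do not exist on $\mathbb{C}^4$; they need dimension $8$. Instead take a maximal anticommuting set $\gamma_1,\dots,\gamma_5$ (e.g.\ $XI,YI,ZX,ZY,ZZ$). Set $P_{a6}:=\gamma_a$ and $P_{ab}:=i\gamma_a\gamma_b$ for $a<b\le5$. This realizes exactly your rule that two operators anticommute iff their index pairs share one index.

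\emph{The null-vector remark.} A pure state does not correspond to a null vector of $\mathbb{C}^6$ but to a maximal isotropic subspace. Equivalently, the antisymmetric matrix $M_{ab}=\langle P_{ab}\rangle_\psi$ is an orthogonal complex structure on $\mathbb{R}^6$.

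\emph{The identity you marked as \emph{should}.} It is true and needs no SDP fallback. The operator $\sum_{a=1}^5\gamma_a\otimes\gamma_a$ commutes with $U\otimes U$ for every $U$ in the group $\mathrm{Spin}(5)\cong Sp(2)$ generated by the $P_{ab}$ with $a<b\le 5$. Since $\mathrm{Sym}^2(\mathbb{C}^4)$ is an irreducible $10$-dimensional representation of this group, the operator is a scalar on the symmetric subspace. Its trace there is $\sum_a\tr[(\gamma_a\otimes\gamma_a)(\id+\mathrm{SWAP})/2]=\sum_a\tr(\gamma_a^2)/2=10$, so the scalar is $1$. Hence $\sum_{a=1}^5\langle\gamma_a\rangle_\psi^2=\langle\psi\otimes\psi|\sum_a\gamma_a\otimes\gamma_a|\psi\otimes\psi\rangle=1$ for every pure $\psi$. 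Combined with $\sum_{P\neq\id}\langle P\rangle_\psi^2=3$, the $|U|=5$ constraint holds with equality on pure states. By the Clifford ($S_6$) symmetry, every degree constraint is tight as well.
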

This holds due to the fact that the graph of any Pauli observables on two qubits is always $\hbar$-perfect~\cite{xu2025simultaneous}.
This also implies that Pauli measurements are optimal when only measurements on a single qubit are allowed, since Clifford measurements on a single qubit are just Pauli measurements.
In the case without fixed partitions, a lower bound of $\delta_{k,m}$ is $\max_{\tau \in {\cal T}_{k,m}} 1/\chi_f(G^{(\tau)})$, which is usually not tight even if all $G^{(\tau)}$ are $\hbar$-perfect.
For a more precise estimation of $\delta_{k,m}$  in general cases, we have to employ the methods from the next section.

\subsection*{ Optimal measurement strategies}
For Pauli observables with an $\hbar$-perfect frustration graph, Theorem~\ref{thm:global} already implies that the optimal strategy with global measurements on a single copy can be realized using only Clifford measurements, as in Refs.~\cite{Chen2024Optimal,King2025Feb}. 
When the graph $G^{(\tau)}$ associated with a partition $\tau$ is $\hbar$-perfect, Clifford measurements are likewise sufficient for the optimal strategy. 
Notice that
\begin{align}\label{eq:chif_tau}
  \chi_f(G^{(\tau)}) =& \min \sum\nolimits_{I\in {\cal I}} x_I\nonumber\\
  \rm{s.t.}\ & \sum\nolimits_{I \ni i} x_I \ge 1, \,\forall i\in V,\nonumber\\
        & x_I\ge 0, \,\forall I \in {\cal I},
\end{align}
where $V$ is the vertex set, and ${\cal I}$ is the set of all maximal independent sets of $G^{(\tau)}$.
Let $(x_I \mid I\in{\cal I})$ be an optimal solution of the linear program~\eqref{eq:chif_tau}. 
Adopting the notation of Theorem~\ref{thm:local}, a maximal independent set $I$ of $G^{(\tau)}$ is automatically an independent set of every $G_j$, which corresponds to a local Clifford measurement $M_{I,j}$ on the $j$-th part for each $j$. The optimal sample complexity parameter is then achieved by implementing the measurement $\{M_{I,1}^{(o_1)}\otimes \cdots \otimes M_{I,k}^{(o_k)}\}_{o_1,\ldots,o_k}$ with probability $x_I/\chi_f(G^{(\tau)})$. Further details appear in SM Sec.~\ref{sec:D}.

In the general case, Clifford measurements are insufficient, and we develop numerical methods to construct optimal strategies.
By the minimax theorem~\cite{Chen2024Optimal}, the sample complexity parameter for partition $\tau$ can be reformulated as
\begin{align}
  \delta(\{S_i\}) &= \max_{\{(q_l,\rho_l)\}\in {\cal E}_\tau }\min_w \sum\nolimits_{i,l} w_i q_l \langle S_i\rangle^2_{\rho_l}\nonumber\\
                  &= \max_{\gamma\in {\rm SEP}_{\tau,2}} \min_w \sum\nolimits_{i} w_i \langle S_i\otimes S_i \rangle_{\gamma} 
\end{align}
where ${\cal E}_{\tau}$ is the set of ensembles of separable states with respect to partition $\tau$, and ${\rm SEP}_{\tau,2}$ is the convex hull of the states $\rho_l\otimes\rho_l$ for separable $\rho_l$ under partition $\tau$.
Once we have obtained the optimal ensemble $\{(q_l,\rho_l)\}$, the optimal measurement strategy follows from it~\cite{Chen2022Jun}.%

Relaxing ${\rm SEP}_{\tau,2}$ via symmetric extensions and PPT conditions~\cite{Horodecki1996Nov,Peres1996Aug} yields a hierarchy that approximates the optimal sample complexity parameter from above, as explained in SM Sec.~\ref{sec:A}.
After symmetry reduction, the problem size grows polynomially with the hierarchy level. 
The quantum de Finetti theorem~\cite{Stormer1969Feb,Caves2002Sep,Christandl2007Jul} then guarantees the existence of a state in ${\rm SEP}_{\tau,2}$, i.e., a strategy with sample complexity parameter at least $\delta - \sum_{i=1}^k 4(2d_i -\max_i d_i)/(m+2)$.
Although this lower bound is impractical, it ensures theoretical convergence of the corresponding strategy. 
Further refinements from the quantum de Finetti theorem may help to tighten this bound. 
To reduce the computational cost, one can work with the shortest realization of the frustration graph during the construction of the optimal strategy. The number of qubits in the $m$-th level of the hierarchy then drops from $(m+2)n$ to $(m+2){\rm rank}_{\mathbb{F}_2}(A)/2$, where $A$ is the adjacency matrix of the frustration graph and ${\rm rank}_{\mathbb{F}_2}$ denotes the rank over the field $\mathbb{F}_2$.
For all the 986 non-isomorphic graphs with no more than $7$ vertices, $m=1$ is enough for the optimal sample complexity parameter.

However, when the number of qubits is greater than $4$, which could correspond to graphs up to 256 vertices, this method becomes heavy. Alternatively, we can use the SDP hierarchy based on state polynomial optimisation \cite{klep2024state,xu2025simultaneous} for bounding $\lambda_r(G,w)$ from above. 
In comparison, the size of SDPs in this hierarchy is determined by the number of Pauli observables instead of the number of qubits, which is more friendly for long Pauli observables. However, it is not so direct to extract a state from the optimal solution of some SDP arising from this hierarchy, which can be complemented by a multiplicative weight update (MWU)  method to construct the optimal strategy. 
Moreover, this MWU method extends naturally to scenarios with further restricted measurement capabilities, by adapting the set of ${\rm PD}_{\tau}$ in Eq.~\eqref{eq:local} accordingly.
The detailed description of this MWU method and its convergence analysis are provided in SM Sec.~\ref{sec:D}.

\subsection*{Application: Estimation of many body Hamiltonians}
A typical application of shadow tomography is the energy estimation of an unknown state~\cite{Huang2020Oct,Hadfield2022May,hadfield2021adaptive,korhonen2026improving}. This task plays a central role in quantum variational algorithms~\cite{Peruzzo2014Jul,Kandala2017Sep}, quantum chemistry~\cite{OMalley2016Jul,Hempel2018Jul} and serves here as a benchmark for different measurement strategies.
Given a Hamiltonian $H$ expanded in Pauli observables and a state $\rho$, the goal is to minimize the sample complexity for estimating $\tr(H\rho)$. We construct measurement strategies for different scenarios that not only improve efficiency over existing results~\cite{Hadfield2022May,McNulty2023Mar}, but also reduce the computational cost to construct strategies and yield an analytical upper bound via graph parameters.

As discussed above, each scenario characterized by a partition $\tau$ corresponds to a frustration graph $G^{(\tau)}$ encoding the structure of the scenario.
Our approach requires only a linear program whose size equals the number of Pauli observables in the Hamiltonian, together with an SDP of the same size. The main steps are:
\begin{enumerate}
  \item For a given Hamiltonian $H=\sum_{i=1}^n c_i S_i$ and partition $\tau$, generate $\tilde{c}_i = c_i^{2/3} /\sum_j c_j^{2/3}$ and graph $G^{(\tau)}$.
  \item Solve the linear program: $\max T$ s.t.\ $\sum_{I\ni i} t_I \ge T \tilde{c}_i$, $\forall i$, $\sum_{I\in {\cal I}} t_I = 1$, $t_I \ge 0$, where ${\cal I}$ is the set of all maximal independent sets of $G^{(\tau)}$.
  \item With the optimal solution $\{t_I\}$, implement a sharp joint measurement of Pauli observables $\{S_{i,j}\}$ for each part $j$ with probability $t_I$.
\end{enumerate}
With this approach, $N$ measurement rounds yield
\begin{align}\label{eq:varbound}
  \var(\hat{H}) 
  \le &\lambda_{\max}\Big( \sum\nolimits_{i,j} \frac{c_ic_j T_{i,j}}{T_iT_j} S_iS_j  \Big)/N \nonumber\\
  \le &\chi_f(G, \tilde{c}) \Big(\sum\nolimits_i c_i^{2/3}\Big)^{3}/N,
\end{align}
where $T_i = \sum_{I\ni i} t_I$ and $T_{i,j} = \sum_{I\ni i,j} t_I$. Table~\ref{tab:atom_variance_norm} compares our approach with the optimized joint measurement strategy of Ref.~\cite{McNulty2023Mar}. For the molecules considered, our approach achieves better estimation accuracy. Further details appear in SM Sec.~\ref{sec:E}. 

\begin{table}[htbp]
\centering
\caption{Variance bounds normalized by the value from unbiased single-qubit joint measurements, for selected active-space configurations $(e,o)$ where $e$ is the number of electrons and $o$ the number of orbitals. {\it Shadow~1/2} denotes the approach developed in this work with single-/two-qubit measurements, {\it Joint} refers to the optimized joint measurement strategy of Ref.~\cite{McNulty2023Mar}.
Smaller values indicate better accuracy.}
\label{tab:atom_variance_norm}
\begin{tabular}{llllll}
\hline\hline
Molecule & Mapper & Shadow 1 & Shadow 2 & Joint  \\
\hline
\multirow{3}{*}{H$_2$ (2, 2)}
  & Parity    & 0.10287 & 0.10287 & 0.12207 \\
  & JW        & 0.34346 & 0.30529 & 0.90055 \\
  & BK        & 0.079681 & 0.079681 & 0.094551 \\
\hline
\multirow{3}{*}{LiH (2, 3)}
  & Parity    & 0.19462 & 0.19117 & 0.25555 \\
  & JW        & 0.67105 & 0.64530 & 0.77572 \\
  & BK        & 0.23962 & 0.22565 & 0.36776 \\
\hline
\multirow{3}{*}{BeH$_2$ (4, 4)}
  & Parity    & 0.17420 & 0.16600 & 0.36968 \\
  & JW        & 0.52850 & 0.48109 & 0.75456 \\
  & BK        & 0.12479 & 0.11886 & 0.26366 \\
\hline
\multirow{3}{*}{H$_2$O (4, 4)}
  & Parity    & 0.19546 & 0.16238 & 0.35223 \\
  & JW        & 0.55741 & 0.52983 & 0.66981 \\
  & BK        & 0.13673 & 0.11368 & 0.24037 \\
\hline\hline
\end{tabular}
\end{table}

{\it Conclusion and outlook.---}
We have developed a graph-theoretic framework for memoryless shadow tomography of Pauli observables across different measurement scenarios characterized by the partition $\tau$. The sample complexity for a set of Pauli observables is determined by frustration graphs $G_j$'s in the partition and lower-bounded by the fractional chromatic number of their union graph $G^{(\tau)}$. This bound is exact for $G_j$'s to be $\hbar$-perfect or $G^{(\tau)}$ to be perfect, for which Clifford measurements suffice, most notably in the two-qubit measurement scenario.
For $\hbar$-imperfect graphs, we have developed adaptable numerical methods to approximate the optimal measurement strategy. Applied to Hamiltonian energy estimation, the framework yields variance bounds that improve upon optimized joint measurement strategies for small molecules.

Several directions remain open. A practical bottleneck is that computing the fractional chromatic number requires enumerating all maximal independent sets of the frustration graph~\cite{Chapman2020Jun}, which becomes intractable for large-molecule Hamiltonians. A promising workaround is to restrict to a manageable subset of maximal independent sets, sacrificing optimality, but preserving feasibility with rigorous variance bounds. On the theoretical side, extending the framework beyond Pauli observables and considering practical constraints would enhance its applicability. A more efficient characterization of $\hbar$-imperfect frustration graphs could also reduce the computational cost of the SDP hierarchy in the general case.
The present work lays the foundation for a systematic search for optimal shadow-tomography strategies tailored to concrete device architectures. This especially acounts for superconducting processors with fixed connectivity maps.

\section*{Acknowledgments}
The authors thank Dong-Ling Deng, Felix Huber, and Sixia Yu for inspiring discussions and comments.
Z.P.X. is supported by {National Natural Science Foundation of China} (Grant No.~12305007),
Anhui Provincial Natural Science Foundation (Grant No.~2308085QA29, No.~2508085Y003), Anhui Province
Science and Technology Innovation Project (No.~202423r06050004).
J.W. is funded by National Key R\&D Program of China under grant No.~2023YFA1009401 and Natural Science Foundation of China under grant No.~12571333.
GK acknowledges support from the Excellence Cluster - Matter and Light for Quantum Computing (ML4Q). GK acknowledges funding by the European Research Council (ERC Grant Agreement No. 948139).
A.W. is supported by the Spanish MICIN (project PID2022-141283NB-I00) with the support of FEDER funds, by the Spanish MICIN with funding from European Union NextGenerationEU (PRTR-C17.I1) and the Generalitat de Catalunya, by the Spanish MTDFP through the QUANTUM ENIA project: Quantum Spain, funded by the European Union NextGenerationEU within the framework of the ``Digital Spain 2026 Agenda'', by the Alexander von Humboldt Foundation, and by the Institute for Advanced Study of the Technical University Munich.

\onecolumngrid
\setcounter{theorem}{0}
\setcounter{secnumdepth}{4}
\renewcommand\thesection{\Alph{section}}
\section*{Supplemental Material}

The Supplemental Material is organized as follows. Section~\ref{sec:A} presents two SDP hierarchies for computing the sample complexity parameter $\delta$ based on state polynomial optimisation (SPO) and the quantum de Finetti theorem, together with the reduction to the shortest realization of the frustration graph. Section~\ref{sec:B} proves the graph-theoretic bounds of the sample complexity parameter for the memoryless scenario (Theorem~1). Section~\ref{sec:C} derives the reformulation of the sample complexity parameter in the block-local measurement scenario (Lemma~2) and its graph-theoretic bounds (Theorem~3). Section~\ref{sec:D} details the construction of optimal measurement strategies, covering both the $\hbar$-perfect case (Clifford measurements) and the general case (the see-saw--multiplicative weight algorithm). Section~\ref{sec:E} applies the framework to Hamiltonian energy estimation. 
\section{SDP hierarchies for $\delta$}
\label{sec:A}
\subsection{SDP hierarchy based on state polynomial optimisation}
Notice that $\beta(G,w)$ is a homogeneous function of degree $1$ in $w$ by definition.
Then the sample complexity parameter can be reformulated as
 \begin{align}\label{eq:delta_hier}
   \delta(\{S_i\}) &= \min_{w\ge 0, \sum_i w_i =1} \beta(G,w)\nonumber\\
                   &= \min_{w\ge 0} \beta(G,w)/\sum\nolimits_i w_i\nonumber\\
                   &= \min_{w:\, w \ge 0, \beta(G,w)\le 1} 1/\sum\nolimits_i w_i\nonumber\\
                   &= 1/\Big[\max_{w:\, w \ge 0, \beta(G,w)\le 1} \sum\nolimits_i w_i\Big].
 \end{align}
The weighted beta number $\beta(G,w)$ admits a complete hierarchy of SDP relaxations based on state polynomial optimisation~\cite{Xu2024Apr}.
At the $r$-th level, one solves
\begin{align}\label{eq:lambdar}
  \lambda_r(G,w) :=& 
\sup  \sum_{i=1}^n w_i \langle x_i\rangle^2\nonumber\\
\mathrm{s.t.}\ & M_r \succeq 0,\; [M_r]_{1,1}=1,\nonumber\\
               & [M_r]_{u,v} = [M_r]_{a,b} \text{ if } \langle u^*v\rangle = \langle a^*b\rangle,
\end{align}
where $M_r$ is the $r$-th order moment matrix indexed by state monomials of degree at most $r$, built from the non-commuting variables $x_i$ satisfying $x_i^2=1$ and $x_i x_j = -x_j x_i$ (resp.\ $x_i x_j = x_j x_i$) when $i$ and $j$ are adjacent (resp.\ nonadjacent) in the graph $G$.
It is known that $\lambda_r(G,w) \to \beta(G,w)$ from the above as $r\to\infty$~\cite{klep2024state}. Nevertheless, this complete hierarchy is computationally very expensive and scales badly with the number of vertices. A simplified and more practical hierarchy was proposed in \cite{xu2025simultaneous}.

Replacing $\beta(G,w)$ by the upper bound $\lambda_r(G,w)$ in Eq.~\eqref{eq:delta_hier}, we define
\begin{equation}\label{eq:omegar_def}
\omega_r := \max_{w\ge 0,\;\lambda_r(G,w)\le 1} \sum_i w_i,
\end{equation}
so that $\delta(\{S_i\}) = \lim_{r\to\infty} 1/\omega_r$.

We now prove that $\omega_r$ itself can be computed by a single SDP.
The SDP in Eq.~\eqref{eq:lambdar} can be written in the standard primal form
\begin{align}\label{eq:primal}
\lambda_r(G,w) = &\sup \sum_{i=1}^n w_i y_i \nonumber\\
\mathrm{s.t.}\ & M_r = A_0 + \sum_{j=1}^k y_j A_j \succeq 0,
\end{align}
where $A_0$ is the matrix with $[A_0]_{1,1}=1$ and zeros elsewhere, and the matrices $A_1,\dots,A_k$ encode the equivalence relations and Hermiticity constraints on the entries of $M_r$.
Its dual reads
\begin{align}\label{eq:dual}
\tilde\lambda_r(G,w) = &\inf \tr(Z A_0) \nonumber\\
\mathrm{s.t.}\ & \tr(Z A_i) = -w_i,\quad i=1,\dots,n, \nonumber\\
                & \tr(Z A_i) = 0,\quad i=n+1,\dots,k, \nonumber\\
                & Z \succeq 0.
\end{align}
By standard Slater-type arguments~\cite{josz2016strong}, there is no duality gap: $\tilde\lambda_r(G,w) = \lambda_r(G,w)$.

Now take any $w$ feasible for the program defining $\omega_r$ in Eq.~\eqref{eq:omegar_def}, i.e.,\ $w\ge 0$ and $\lambda_r(G,w)\le 1$.
Since $\tilde\lambda_r(G,w)=\lambda_r(G,w)\le 1$, there exists $Z\succeq 0$ satisfying the constraints in Eq.~\eqref{eq:dual} together with $\tr(Z A_0)\le 1$.
Therefore, the pair $(w,Z)$ is feasible for
\begin{align}\label{eq:omega_sdp}
\tilde\omega_r := &\sup \sum_{i=1}^n w_i \nonumber\\
\mathrm{s.t.}\ & \tr(Z A_i) = -w_i,\quad i=1,\dots,n, \nonumber\\
                & \tr(Z A_i) = 0,\quad i=n+1,\dots,k, \nonumber\\
                & \tr(Z A_0) \le 1, \nonumber\\
                & w_i \ge 0,\quad Z \succeq 0.
\end{align}
Conversely, any feasible $(w,Z)$ of Eq.~\eqref{eq:omega_sdp} yields a feasible $Z$ for Eq.~\eqref{eq:dual} with $\tr(Z A_0)\le 1$. Hence $\lambda_r(G,w)=\tilde\lambda_r(G,w)\le 1$, and so $w$ is feasible for Eq.~\eqref{eq:omegar_def}.
Thus, $\omega_r = \tilde\omega_r$ which could be obtained by solving the SDP~\eqref{eq:omega_sdp}.

In summary, $1/\omega_r$ provides a converging upper bound on $\delta(\{S_i\})$:
\begin{equation}
\delta(\{S_i\}) = \frac{1}{\omega_\infty} \le \frac{1}{\omega_r},
\end{equation}
with the approximation improving monotonically as $r$ increases.
In particular, for all $986$ non-isomorphic graphs with at most $7$ vertices, the $r=1$ level of this hierarchy already yields the exact value of $\delta(\{S_i\})$. More details on the performance is discussed in Sec.~\ref{sssec:smmwu}.

\subsection{SDP hierarchy via quantum de Finetti theorem}
Quantum de Finetti theorem ensures the approximation of the set of separable states with the set of reduced states of symmetric states in a large system.
This induces our second SDP hierarchy, which not only provides upper bound for the complexity parameters $\delta$ and $\delta_{\tau}$, but also leads to effective measurement strategies. The $m$-th level of this hierarchy reads:
\begin{align}
  \mu_m :=&\max_{\gamma} \mu\nonumber\\
  \rm{s.t.}\ & \tr[(S_i\otimes S_i)\gamma] \ge \mu,\, \forall i,\nonumber\\
        & \gamma_{m+2} \in {\rm Sym}^{m+2}({\cal H}) \cap {\rm PPT},\nonumber\\
        & \gamma_{m+2} \succeq 0, \,\tr(\gamma)=1,
\end{align}
where ${\rm Sym}^{m+2}({\cal H})$ is the symmetric subspace of ${\cal H}^{\otimes (m+2)}$ and ${\cal H}$ is the Hilbert space of $\{S_i\}$, $\gamma$ is the reduced state of $\gamma_{m+2}$ on the first two ${\cal H}$'s, ${\rm PPT}$ is the set of PPT states for any bipartition.

The quantum de Finetti theorem~\cite[Thm. II.8]{Christandl2007Jul} ensures that there is a separable state $\tau = \sum_k \lambda_k \rho_k\otimes\rho_k$ with $\sum_k \rho_k=1$ such that the trace distance
$||\tau-\gamma_{12}||_1 \le 4d/(m+2)$, where $d$ is the dimension of ${\cal H}$. Then
\begin{align}
  \tr[(S_i\otimes S_i)\tau] &\ge \mu_{m} - 4d/(m+2)\\
                             &\ge \delta - 4d/(m+2),
\end{align}
where the second inequality is due to the fact that $\mu_m \ge \delta$.

Similarly, this can be generalized to the situation of local measurements.
\begin{align}
  \mu_m := &\max_{\gamma} \mu\\
  \rm{s.t.}\ & \tr[(S_i\otimes S_i)\gamma_{12}] \ge \mu, \,\forall i,\\
        & \gamma\in {\rm Sym}^{m+2}_{A_1:\cdots:A_{2n}}({\cal H}) \cap {\rm PPT},\\
       & \gamma \succeq 0,\, \tr(\gamma)=1,
\end{align}
where $A_1:\cdots:A_{2n}$ is a partition of two copies of all the qubits such that $A_{n+i}=A_i$, and ${\rm Sym}^{m+2}_{A_1:\cdots:A_{2n}}$ is the intersection of the symmetric subspaces corresponding to the $(m+2)$ copies of $A_i$ for $i=1, \ldots, 2n$, and $\gamma_{12}$ is the reduced state for the first copy of $A_1, \ldots, A_{2n}$, or equivalently, the first two copies of $A_1, \ldots, A_n$.

Then there is a separable state $\tau$ such that~\cite[Cor. 1]{Brandao2012Oct} 
\begin{equation}\label{eq:muldefinetti}
  ||\tau-\gamma_{12}||_1 \le  \sum_{i=1}^{n} \frac{8d_i}{m+2},
\end{equation}
which implies the existence of a state $\tau'=\sum_k \lambda_k \rho_k\otimes\rho_k$ with $\rho_k$ separable according to the partition $A_1:\cdots:A_n$ such that
\begin{equation}\label{eq:mullower}
  \tr[(S_i\otimes S_i)\tau'] \ge \delta - \sum_{i=1}^n 8d_i/(m+2).
\end{equation}
As $m$ goes to infinity, we obtain an ensemble for the optimal sample complexity parameter $\delta$.

More explicitly, the inequality in Eq.~\eqref{eq:muldefinetti} is due to~\cite[Cor. 1]{Brandao2012Oct}
\begin{equation}
  ||\tau-\gamma_{12}||_1 \le 4 \sum_{i=1}^{2n-1} \frac{d_i}{m+2} \le  \sum_{i=1}^{n} \frac{8d_i}{m+2}.
\end{equation}
The inequality in Eq.~\eqref{eq:mullower} follows from the following fact. Let $\tau = \sum_{k} l_k \rho_k\otimes\sigma_k$ with $\rho_k$ and $\sigma_k$ separable according to the partitions $A_1:\cdots:A_n$ and $A_{n+1}:\cdots:A_{2n}$, respectively, and set $\tau' = \sum_{k} l_k(\rho_k\otimes\rho_k + \sigma_k\otimes\sigma_k)/2$. Then the Cauchy--Schwarz inequality yields
\begin{equation}
  \tr[(S_i\otimes S_i)\tau'] \ge \tr[(S_i\otimes S_i)\tau], \,\forall i,
\end{equation}
which consequently implies the inequality in Eq.~\eqref{eq:mullower}.

\subsection{Reduction to shortest realization}
\label{sec:F}

For a given set of Pauli observables $\{S_i\}$ with the frustration graph $G$, there is a unitary $U$ such that~\cite{Xu2024Apr} 
\begin{equation}
  S_i = U (\bar{S}_i\otimes D_i) U^\dagger,
\end{equation}
where $\{\bar{S}_i\}$ is a realization of the frustration graph $G$ on ${\rm rank}_{\mathbb{F}_2}(A)/2$ qubits with $A$ to be the adjacency matrix of $G$ and ${\rm rank}_{\mathbb{F}_2}$ is the rank over the field $\mathbb{F}_2$, and $D_i$'s are diagonal matrices with diagonal elements to be $1$ or $-1$.

Then for a given state $\bar{\tau} = \sum_k \lambda_k \bar{\rho}_k\otimes\bar{\rho}_k$,
let
\begin{equation}
  \tau = \sum\nolimits_k \lambda_k \rho_k \otimes \rho_k,\ \rho_k = U(\bar{\rho}_k\otimes |0\rangle\langle 0|)U^\dagger.
\end{equation}
We have
\begin{align}
  \tr[(S_i\otimes S_i)\tau] &= \sum\nolimits_k \lambda_k \tr[S_i \rho_k]^2\\
                            &= \sum\nolimits_k \lambda_k \tr[(\bar{S}_i\otimes D_i)(\bar{\rho}_k\otimes |0\rangle\langle 0|)]^2\\
                            &= \sum\nolimits_k \lambda_k \tr[\bar{S}_i\bar{\rho}_k]^2 \langle 0|D_i|0\rangle^2\\
                            &= \tr[(\bar{S}_i\otimes\bar{S}_i)\bar{\tau}].
\end{align}
Hence, the problem reduces to finding a state $\bar{\tau} = \sum_k \lambda_k \bar{\rho}_k\otimes\bar{\rho}_k$ such that $\tr[(\bar{S}_i\otimes\bar{S}_i)\bar{\tau}] \ge \delta - \epsilon$. This simplifies the problem, especially when ${\rm rank}_{\mathbb{F}_2}(A)/2 \ll n$ with $n$ to be the number of qubits in the support of $S_i$'s.

Similarly, we could consider the situation where we only have local measurements. In this situation, we consider the reduction of $\{S_i^{(k)}\}$ one by one, where $S_i^{(k)}$ is the $k$-th part in the partition of $S_i$, i.e., $S_i = \otimes_k S_i^{(k)}$.

\section{Bounds for the sample complexity parameter}
\label{sec:B}

\begin{theorem}
  For a given set of Pauli observables $\{S_i\}$ with $G$ to be its frustration graph, it holds that
  \begin{equation}
    \delta(\{S_i\}) \in [\,1/\chi_f(G),\; 1/\vartheta(\bar{G})\,],
  \end{equation}
  where $\chi_f(G)$ is the fractional chromatic number of $G$, and $\vartheta(\bar{G})$ is the Lov\'asz number of the complement graph $\bar{G}$.
  Whenever $G$ is $\hbar$-perfect, the lower bound is exact and Clifford measurements are optimal.
\end{theorem}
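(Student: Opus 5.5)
The argument starts from Eq.~\eqref{eq:reform}, $\delta(\{S_i\})=\min_{w\in\mathcal D}\beta(G,w)$, together with the sandwich $\alpha(G,w)\le\beta(G,w)\le\vartheta(G,w)$ valid for every $w\ge 0$. Minimizing each term over probability vectors $w$ gives
\begin{equation*}
\min_{w\in\mathcal D}\alpha(G,w)\;\le\;\delta(\{S_i\})\;\le\;\min_{w\in\mathcal D}\vartheta(G,w),
\end{equation*}
since the pointwise inequalities survive taking the minimum. What remains is to identify the two outer quantities as $1/\chi_f(G)$ and $1/\vartheta(\bar G)$.

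\emph{Lower bound.} Write $\alpha(G,w)=\max_{u\in\stab(G)}\sum_i w_iu_i$, where $\stab(G)$ is the convex hull of indicator vectors of independent sets. I will use the homogeneity argument of Eq.~\eqref{eq:delta_hier}:
\begin{equation*}
\min_{w\in\mathcal D}\alpha(G,w)=1\Big/\max\Big\{\textstyle\sum_i w_i : w\ge0,\ \sum_{i\in I}w_i\le1\ \forall I\in\mathcal I\Big\}.
\end{equation*}
The inner maximum is the fractional clique LP on independent sets. Its LP dual is exactly the covering program Eq.~\eqref{eq:chif_tau}, so by strong LP duality it equals $\chi_f(G)$. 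Alternatively, I can apply von Neumann's minimax theorem to the bilinear form on $\mathcal D\times\stab(G)$ and use the characterization $\max\{t: t\mathbf 1\in\stab(G)\}=1/\chi_f(G)$.

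\emph{Upper bound.} This step is analogous but uses $\th(G)$. By homogeneity, $\min_{w\in\mathcal D}\vartheta(G,w)=1/\max\{\sum_iw_i : w\ge0,\ \vartheta(G,w)\le1\}$. The constraint $\vartheta(G,w)\le1$ says that $w$ lies in the antiblocker of $\th(G)$. By the Grötschel--Lovász--Schrijver theorem, this antiblocker is $\th(\bar G)$. Hence the inner maximum is $\max_{w\in\th(\bar G)}\sum_i w_i=\vartheta(\bar G,\mathbf 1)=\vartheta(\bar G)$, which gives $1/\vartheta(\bar G)$.

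I expect this antiblocker identity to be the main subtlety. I will cite it rather than reprove it. If a self-contained route is wanted, I would instead write $\vartheta(G,w)$ as an SDP and carry out the min over $w$ inside the SDP, checking that its dual is the standard $\vartheta(\bar G)$ program. In that route I need to be careful with the convention for which of $G$ and $\bar G$ carries the orthogonality constraints. The paper's definition of $\th(G)$ in the block-local section sets $\langle v_i|v_j\rangle=0$ on edges of $G$, so orthogonal representations of $G$ have independent sets as the "compatible" sets, and the dual correctly pairs with $\bar G$.

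\emph{$\hbar$-perfect case.} If $\be(G)=\stab(G)$, then $\beta(G,w)=\alpha(G,w)$ for all $w$, so $\delta=1/\chi_f(G)$ exactly. For achievability, I take an optimal fractional colouring $(x_I)$ from Eq.~\eqref{eq:chif_tau}. Each maximal independent set $I$ is a set of pairwise commuting Pauli strings, so it can be simultaneously diagonalized by a Clifford unitary followed by a computational-basis measurement. I measure $I$ with probability $x_I/\chi_f(G)$. Each $S_i$ is then observed sharply with probability at least $1/\chi_f(G)$. This matches the optimal parameter via the achievability result of Refs.~\cite{Chen2024Optimal,King2025Feb}, so Clifford measurements are optimal.
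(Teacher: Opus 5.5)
Your proposal is correct and follows the paper's proof essentially step by step. The steps are the sandwich $\alpha(G,w)\le\beta(G,w)\le\vartheta(G,w)$ minimized over $\mathcal D$, the homogeneity rescaling, LP duality for $1/\chi_f(G)$ (the paper phrases this via the fractional packing number $\alpha^*(\bar G)$), and the antiblocker identity $\th(\bar G)=\{w\ge 0:\sum_i w_iu_i\le 1\ \forall u\in\th(G)\}$ cited from Knuth for $1/\vartheta(\bar G)$; your explicit Clifford strategy built from an optimal fractional colouring simply spells out the achievability the paper cites from Refs.~\cite{Chen2024Optimal,King2025Feb}.
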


Recall from the main text that the sample complexity parameter can be expressed via the beta number as
\begin{equation}\label{eq:delta_beta}
\delta(\{S_i\}) = \min_{w\in\mathcal{D}} \beta(G,w),
\end{equation}
where $\beta(G,w) = \max_{\rho} \sum_i w_i \langle S_i\rangle_\rho^2$ and $\mathcal{D}$ is the set of probability vectors.

It is known that for any non-negative weight vector $w$, the weighted beta number is bounded by the weighted independence number and the weighted Lov\'asz number~\cite{Xu2024Apr,deGois2023Jun,Hastings2022Jun}:
\begin{equation}\label{eq:beta_bounds}
\alpha(G,w) \le \beta(G,w) \le \vartheta(G,w), \quad \forall\, w\ge 0.
\end{equation}

We now relate the minima of the two bounding quantities to graph parameters.
For the lower bound,
\begin{align}
\min_{w\in\mathcal{D}} \alpha(G,w)
  &= \min_{w\ge 0} \frac{\alpha(G,w)}{\sum_i w_i} \nonumber\\
  &= \min_{w\ge 0,\;\alpha(G,w)=1} \frac{1}{\sum_i w_i} \nonumber\\
  &= \frac{1}{\alpha^*(\bar{G})} = \frac{1}{\chi_f(G)},
\end{align}
where $\alpha^*(\bar{G})$ is the fractional packing number of the complement graph~\cite{schrijver1979fractional}, which equals the fractional chromatic number $\chi_f(G)$.
The last step follows from the definition of $\alpha^*(\bar{G})$: it is the maximum of $\sum_i w_i$ subject to $w\ge 0$ and $\sum_{i\in C} w_i \le 1$ for every clique $C$ of $\bar{G}$, i.e., every independent set of $G$.

For the upper bound,
\begin{align}
\min_{w\in\mathcal{D}} \vartheta(G,w)
  &= \min_{w\ge 0} \frac{\vartheta(G,w)}{\sum_i w_i} \nonumber\\
  &= \min_{w\ge 0,\;\vartheta(G,w)=1} \frac{1}{\sum_i w_i} \nonumber\\
  &= \frac{1}{\vartheta(\bar{G})},
\end{align}
where the last equality uses the duality relation: 
${\max_{w\ge 0,\;\vartheta(G,w)=1} \sum_i w_i} = \vartheta(\bar{G})$, which follows from
$\mathrm{TH}(\bar{G}) = \{w\mid w\ge 0,\; \sum_i w_i v_i \le 1,\;\forall v\in\mathrm{TH}(G)\}$~\cite{knuth1994sandwich}.

Combining Eqs.~\eqref{eq:delta_beta} and~\eqref{eq:beta_bounds} with the two evaluations above yields
\begin{equation}
\delta(\{S_i\}) \in [\,1/\chi_f(G),\; 1/\vartheta(\bar{G})\,].
\end{equation}

When $G$ is $\hbar$-perfect, we have $\alpha(G,w) = \beta(G,w)$ for all $w\ge 0$ by definition.
The lower bound then becomes exact:
\begin{equation}
\delta(\{S_i\}) = \min_{w\in\mathcal{D}} \alpha(G,w) = 1/\chi_f(G).
\end{equation}
Since Clifford measurements achieve the fractional chromatic number bound~\cite{Chen2024Optimal,King2025Feb}, they are optimal for any $\hbar$-perfect frustration graph.
\section{Block-local measurement scenario}
\label{sec:C}
For a given partition $\tau$, the limitation on measurements leads to potential increase of the sample complexity, i.e., a smaller sample complexity parameter $\delta_\tau$. We firstly prove a reformulation of $\delta_\tau$, and then provide the graph-theoretic bounds for it.
\subsection{Reformulation of $\delta_\tau$}
\begin{lemma}\label{lm:partition}
  For a given partition $\tau$, it holds that
  \begin{equation}\label{eq:local_sm}
    \delta_\tau(\{S_i\}) = \min_{w\in {\cal D}} \max_{\rho \in {\rm PD}_\tau} \sum\nolimits_i w_i \langle S_i\rangle_{\rho}^2,
  \end{equation}
where $\mathrm{PD}_\tau$ denotes the set of product states with respect to $\tau$.
\end{lemma}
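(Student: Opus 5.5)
The plan is to adapt the characterization of Ref.~\cite{Chen2024Optimal} to restricted POVMs. That characterization says the optimal memoryless sample-complexity parameter for Pauli observables equals
\begin{equation*}
\max_{\text{POVM } M}\ \min_i\ \Bigl\|\sum\nolimits_o \frac{\tr(S_i M^{(o)})^2}{\tr(M^{(o)})}\, \frac{M^{(o)}}{\tr (M^{(o)})}\Bigr\|\ \text{(suitably normalized)},
\end{equation*}
or equivalently the minimax of the ensemble form $\max_{\{(q_l,\rho_l)\}}\min_i \sum_l q_l\langle S_i\rangle_{\rho_l}^2$. Here the ensemble $\{(q_l,\rho_l)\}$ is obtained from the rank-one refinement of a POVM, with $q_l=\tr(M^{(l)})/d$ and $\rho_l=M^{(l)}/\tr(M^{(l)})$.

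The argument proceeds in three steps.

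\textbf{Step one: reduce to ensembles of product states.} I will show that the quantity governing $\delta_\tau$ is $\max_{\mathcal{E}}\min_i\sum_l q_l\langle S_i\rangle_{\rho_l}^2$, where $\mathcal{E}$ ranges over ensembles arising from POVMs of the form $M_1\otimes\cdots\otimes M_k$. Refining each $M_j$ to rank one does not decrease $\sum_l q_l\langle S_i\rangle^2_{\rho_l}$, by convexity of $x\mapsto x^2$ (Cauchy--Schwarz). After refinement, every normalized element is a product pure state with respect to $\tau$, so all ensemble members lie in $\mathrm{PD}_\tau$.

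Conversely, any finite ensemble $\{(q_l,\rho_l)\}$ with $\rho_l\in\mathrm{PD}_\tau$ that averages to $\id/d$ is itself a valid POVM. When the product states are tensor products of local rank-one POVM elements, this POVM is (a mixture of) block-local measurements. For a general ensemble of product states, I will symmetrize with a local Pauli twirl on each block, or with local unitary $2$-designs. This leaves each $\langle S_i\rangle^2$ invariant, since conjugation by local Paulis only flips signs of local Pauli factors. It also makes the average equal to $\id/d$ and turns the ensemble into a randomized mixture of product-basis measurements, one per twirl element, which is allowed.

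\textbf{Step two: apply the minimax theorem.} The objective $\sum_i w_i\sum_l q_l\langle S_i\rangle^2$ is bilinear in $w\in\mathcal D$ and the probability measure over $\mathrm{PD}_\tau$. Both sets are convex and $\mathcal D$ is compact, so minimax gives
\begin{equation*}
\delta_\tau = \min_{w}\max_{\mu}\int \sum\nolimits_i w_i\langle S_i\rangle_\rho^2\,d\mu(\rho)= \min_w\max_{\rho\in\mathrm{PD}_\tau}\sum\nolimits_i w_i\langle S_i\rangle_\rho^2 ,
\end{equation*}
since the maximum of a linear functional over measures is attained at a point mass.

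\textbf{Step three: verify the lower bound on sample complexity.} I will check that the lower-bound argument of Ref.~\cite{Chen2024Optimal}, which builds a Le Cam-type hard instance of perturbed maximally mixed states, only uses the measured POVM elements. It therefore goes through verbatim when the POVM is restricted to block-local form.

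The main obstacle is step one's converse direction: ensuring that an arbitrary ensemble on $\mathrm{PD}_\tau$ can be realized by block-local POVMs without increasing the normalization. The twirling argument is intended to handle this, but its interaction with the requirement $\sum_l q_l\rho_l=\id/d$ needs to be checked carefully.
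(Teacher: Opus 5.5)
Your proposal is correct and is in substance the paper's proof. The paper also uses that normalized elements of a product POVM are product states, so no rank-one refinement is needed. For the converse it takes a single maximizing $\rho\in\mathrm{PD}_\tau$ and the local Pauli-twirled POVM $\{Q\rho Q/d\}_Q$: this is a genuine tensor product of local POVMs, sums to $\id$ automatically because $\sum_Q Q\rho Q=d\,\id$, and preserves each $\langle S_i\rangle_\rho^2$, which settles the obstacle you flagged.

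The differences are minor. The paper starts from the $\min_w\max_M$ form of the characterization in Ref.~\cite{Chen2024Optimal} and compares the inner maxima at each fixed $w$, so it needs no minimax swap and simply assumes what your step three checks. You should drop the local unitary $2$-design alternative, since only Pauli conjugation leaves every $\langle S_i\rangle^2$ invariant.
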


The sample complexity of estimating a set of Pauli observables $\{S_i\}$ to precision $\epsilon$ with a measurement $M=\{F_s\}$ is governed by the $\chi^2$-divergence~\cite{Chen2024Optimal}:
\begin{equation}
\chi_M^2\!\left(\frac{I + 3\epsilon S_i}{d}\,\Big\|\,\frac{I}{d}\right)
= 9\epsilon^2 \sum_s \frac{\tr(S_i F_s)^2}{d\,\tr(F_s)},
\end{equation}
where $d=2^n$ is the Hilbert-space dimension.
The optimal sample complexity parameter, up to the factor $9\epsilon^2$, is therefore
\begin{equation}\label{eq:chidef}
\delta(\{S_i\}) = \min_{w\in\mathcal{D}} \max_{M\in\mathcal{M}} \sum_i w_i \sum_s \frac{\tr(S_i F_s)^2}{d\,\tr(F_s)},
\end{equation}
where $\mathcal{M}$ is the set of admissible measurements.
For the global case without locality constraint, it is known that this reduces to
$\delta(\{S_i\}) = \min_{w\in\mathcal{D}} \max_\rho \sum_i w_i \langle S_i\rangle_\rho^2$~\cite{Chen2024Optimal,King2025Feb}.

We now prove the local analogue.
Fix a partition $\tau = (\mathrm{part}_1,\dots,\mathrm{part}_k)$ and let $\mathcal{M}_\tau$ be the set of measurements that are tensor products of POVMs on each part.
We claim that for any weight vector $w$,
\begin{equation}\label{eq:cl}
\max_{M\in\mathcal{M}_\tau} \sum_i w_i \sum_s \frac{\tr(S_i F_s)^2}{d\,\tr(F_s)}
= \max_{\rho\in\mathrm{PD}_\tau} \sum_i w_i \langle S_i\rangle_\rho^2.
\end{equation}

\textit{Proof of the inequality $\le$.}\enspace
For any measurement $M=\{F_s\}\in\mathcal{M}_\tau$, each POVM element is of the form $F_s = \bigotimes_{j=1}^k F_{s,j}$ with $F_{s,j}$ acting on $\mathrm{part}_j$.
Consequently $\rho_s := F_s/\tr(F_s)$ is a product state, i.e., $\rho_s\in\mathrm{PD}_\tau$.
Then
\begin{align}
\sum_i w_i \sum_s \frac{\tr(S_i F_s)^2}{d\,\tr(F_s)}
&= \sum_s \frac{\tr(F_s)}{d} \sum_i w_i \tr(S_i \rho_s)^2 \nonumber\\
&\le \sum_s \frac{\tr(F_s)}{d} \max_{\rho\in\mathrm{PD}_\tau} \sum_i w_i \langle S_i\rangle_\rho^2 \nonumber\\
&= \max_{\rho\in\mathrm{PD}_\tau} \sum_i w_i \langle S_i\rangle_\rho^2,
\end{align}
since $\sum_s \tr(F_s)/d = 1$ due to $\sum_s F_s = d\,\id$.
Thus the left-hand side of~(\ref{eq:cl}) never exceeds the right-hand side.

\textit{Proof of the inequality $\ge$.}\enspace
Let $\rho\in\mathrm{PD}_\tau$ attain the maximum on the right-hand side.
Construct the measurement $M = \left\{\frac{1}{d} Q\rho Q \mid Q\in{\cal S}_n\right\}$, where ${\cal S}_n$ is the set of all $4^n$ Pauli observables.
This measurement is a product POVM as each $Q = \bigotimes_j Q_j$ with $Q_j\in{\cal S}_{n_j}$. Hence $M\in\mathcal{M}_\tau$.
For this measurement,
\begin{align}
\sum_i w_i \sum_Q \frac{\tr(S_i\, Q\rho Q)^2}{d\cdot d}
&= \sum_i w_i \sum_Q \frac{\tr(Q S_i Q\,\rho)^2}{d^2} \nonumber\\
&= \sum_i w_i \sum_Q \frac{\tr(S_i \rho)^2}{d^2} \nonumber\\
&= \sum_i w_i \sum_Q \frac{1}{d^2} \langle S_i\rangle_\rho^2 \nonumber\\
&= \sum_i w_i \langle S_i\rangle_\rho^2,
\end{align}
where the second line uses $Q S_i Q = \pm S_i$ since Pauli observables either commute or anticommute with any Pauli string $Q$, and the sign disappears after squaring. %
Thus the left-hand side is at least the right-hand side.

Both inequalities together establish Eq.~\eqref{eq:cl}.
Substituting into Eq.~\eqref{eq:chidef} with $\mathcal{M}=\mathcal{M}_\tau$ yields
\begin{equation}
\delta_\tau(\{S_i\}) = \min_{w\in\mathcal{D}} \max_{\rho\in\mathrm{PD}_\tau} \sum_i w_i \langle S_i\rangle_\rho^2,
\end{equation}
which is the claimed formula.

The same result holds without fixing a partition, by taking the union over all admissible $\tau$. Let $\delta_{k,m}$ denote the sample complexity parameter for the case where arbitrary measurements on $m$ qubits are accessible and the system is partitioned into at least $k$ parts, and let ${\cal T}_{k,m}$ denote the set of all compatible partitions. Then
\begin{align}
  \delta_{k,m} &= \min_{w\in\mathcal{D}} \max_{\rho\in\cup_{\tau\in {\cal T}_{k,m}}\mathrm{PD}_\tau} \sum_i w_i \langle S_i\rangle_\rho^2\nonumber\\
               &=\min_{w\in\mathcal{D}} \max_{\rho\in\mathrm{SEP}_{k,m}} \sum_i w_i \langle S_i\rangle_\rho^2,
\end{align}
where ${\rm SEP}_{k,m}$ denotes the set of $k$-separable states with entanglement depth at most $m$. The second equality follows because ${\rm SEP}_{k,m}$ is the convex hull of $\cup_{\tau\in {\cal T}_{k,m}} {\rm PD}_\tau$ and the function $\langle S_i\rangle_{\rho}^2$ is convex in $\rho$.

\subsection{Graph-theoretic bounds for $\delta_\tau$}

\begin{theorem}\label{thm:local_sm}
  For a given partition $\tau$ and Pauli observables $\{S_i = \otimes_{j=1}^k S_{i,j}\}$ with $S_{i,j}$ acts on the $j$-th part,
  \begin{align}
    \delta_{\tau}(\{S_i\}) &= \min_{w} \max_{t_j \in \be(G_j)} \sum\nolimits_i w_i \prod\nolimits_{j=1}^k t_{i,j}\nonumber\\
                           &\in [1/\chi_f(G^{(\tau)}), 1/\vartheta(\bar{G}^{(\tau)})],
  \end{align}
  where $G^{(\tau)} = \cup_{j=1}^k G_j$ and $G_j$ is the frustration graph of $\{S_{i,j}\}_i$. The lower bound is tight if either all $G_j$'s are $\hbar$-perfect, or $G^{(\tau)}$ is perfect.
\end{theorem}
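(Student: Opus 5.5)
Starting from Lemma~\ref{lm:partition}, I maximize over $\mathrm{PD}_\tau$. A product state $\rho=\bigotimes_j\sigma_j$ gives $\langle S_i\rangle_\rho^2=\prod_j \langle S_{i,j}\rangle_{\sigma_j}^2$. For each $j$, the set of vectors $(\langle S_{i,j}\rangle^2_{\sigma_j})_i$ obtained as $\sigma_j$ ranges over states of part $j$ has a convex hull, and its down-closure is $\be(G_j)$ by definition of $\beta(G_j,\cdot)$ as the support function. The objective $\sum_i w_i\prod_j t_{i,j}$ is nonnegative, linear and monotone in each $t_j$ separately, so maximizing over this convex down-closure gives the same value as maximizing over the realizable points. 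This yields the first equality, and it generalizes Eq.~\eqref{eq:bipartition} directly.

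For the lower bound, I shrink each $\be(G_j)$ to $\stab(G_j)\subseteq\be(G_j)$. Multilinearity lets me restrict to vertices, i.e., indicator vectors of independent sets $I_j$ of $G_j$. Their entrywise product is the indicator of $\bigcap_j I_j$. Extending the two-factor argument in the main text by induction on $k$, these intersections are exactly the independent sets of $G^{(\tau)}$. Hence the value is at least $\min_w\alpha(G^{(\tau)},w)=1/\chi_f(G^{(\tau)})$, by the computation in the proof of Theorem~\ref{thm:global}. If every $G_j$ is $\hbar$-perfect, the shrinking step loses nothing and equality holds.

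For the upper bound, I enlarge $\be(G_j)\subseteq\th(G_j)$. Given orthogonal representations $\{|u_{j,i}\rangle\}$ with handles $|s_j\rangle$, the tensor products $\bigotimes_j|u_{j,i}\rangle$ with handle $\bigotimes_j|s_j\rangle$ form an orthogonal representation of $G^{(\tau)}$: if $(i,i')$ is an edge of some $G_j$, the $j$-th factors are orthogonal. So $\prod_j t_{i,j}$ lies in $\th(G^{(\tau)})$, and the value is at most $\min_w\vartheta(G^{(\tau)},w)=1/\vartheta(\bar G^{(\tau)})$, again by the duality $\th(\bar G)$ = antiblocker of $\th(G)$ used for Theorem~\ref{thm:global}. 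One detail needs checking here: the tensor-product vectors may need normalization. The convention $u_i=|\langle v_i|s\rangle|^2$ with unit vectors is preserved under tensor products, so this is fine.

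The main obstacle is tightness when $G^{(\tau)}$ is perfect but the $G_j$ are not. Perfection gives $\stab=\th$ for $G^{(\tau)}$, so $\alpha(G^{(\tau)},w)=\vartheta(G^{(\tau)},w)$ for all $w$. Then the upper and lower bounds coincide: $1/\chi_f(G^{(\tau)})=1/\vartheta(\bar G^{(\tau)})$, because $\chi_f=\vartheta(\bar G)$ for perfect graphs (the antiblocker of $\stab(G)$ is the clique-constrained polytope $\stab(\bar G)$, and $\bar G$ is also perfect). The sandwich then forces equality, and this case should go through without constructing anything explicitly.
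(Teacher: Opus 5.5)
Your proposal is correct and follows the paper's proof essentially step by step. The steps match: the multilinearity and down-closure argument for the first equality, the vertex argument in which $\bigcap_j I_j$ yields $\alpha(G^{(\tau)},w)$, the tensor-product orthogonal representation for the $\th$ relaxation, and $\chi_f(G^{(\tau)})=\vartheta(\bar G^{(\tau)})$ when $G^{(\tau)}$ is perfect. The only difference is that you state the general lower bound explicitly via $\stab(G_j)\subseteq\be(G_j)$, whereas the paper's lower-bound paragraph treats only the $\hbar$-perfect case and leaves the general inequality implicit.
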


\begin{proof}[Proof of Theorem~3 (general $k$-partition)]
From Lemma~\ref{lm:partition} and the definition of the $\be$ body~\cite{Xu2024Apr}, for a fixed partition $\tau$ and Pauli observables $\{S_i = \otimes_{j=1}^k S_{i,j}\}$,
\begin{align}\label{eq:kpartition_delta}
  \delta_{\tau}(\{S_i\}) 
  &= \min_{w} \max_{\rho \in {\rm PD}_\tau} \sum\nolimits_i w_i \langle S_i\rangle_{\rho}^2 \nonumber\\
  &= \min_{w} \max_{\rho_j} \sum\nolimits_i w_i \prod\nolimits_{j=1}^k \langle S_{i,j}\rangle_{\rho_j}^2 \nonumber\\
  &= \min_{w} \max_{t_j \in \be(G_j)} \sum\nolimits_i w_i \prod\nolimits_{j=1}^k t_{i,j},
\end{align}
where $G_j$ is the frustration graph of $\{S_{i,j}\}_i$ and $\be(G_j) = \{v \ge 0| \sum_i w_i v_i \le \beta(G_j,w), \forall w\ge 0\}$. The last equality holds due to the multilinearity of function to optimize for the points in $\be(G_j)$.

We now prove the two bounds stated in the theorem.

\noindent\textit{Lower bound.}\enspace
When each $G_j$ is $\hbar$-perfect, we have $\be(G_j) = \stab(G_j)$ by definition~\cite{Xu2024Apr}, where $\stab(G_j)$ is the stable set polytope whose vertices are precisely the index vectors of (maximal) independent sets of $G_j$.
For fixed $w$, the objective $\sum_i w_i \prod_{j=1}^k t_{i,j}$ is multilinear in each $t_j$.
With $t_2,\dots,t_k$ fixed, the maximization over $t_1\in\stab(G_1)$ attains its optimum at a vertex of $\stab(G_1)$, i.e., at the index vector of an independent set $I_1\in\mathcal{I}(G_1)$.
Since the weight vector $w$ are non-negative, only the maximal independence sets contribute to the optimum.
Repeating this argument for $j=2,\dots,k$, we may restrict each $t_j$ to be the index vector of some maximal independent set $I_j\in\mathcal{I}(G_j)$.
The entrywise product $\bigodot_{j=1}^k t_j$ is then the index vector of $\bigcap_{j=1}^k I_j$.

We claim that the set
\begin{equation}
  \Bigl\{\bigcap\nolimits_{j=1}^k I_j \;\Big|\; I_j\in\mathcal{I}(G_j),\; j=1,\dots,k\Bigr\},
\end{equation}
after removing subsets, is exactly the set of maximal independent sets of $G^{(\tau)} := \bigcup_{j=1}^k G_j$, where the union means the union of edge sets.
Indeed, (i)~any two nodes in $\bigcap_{j=1}^k I_j$ are disconnected in every $G_j$, hence disconnected in $G^{(\tau)}$, so the intersection is an independent set of $G^{(\tau)}$;
(ii)~conversely, any independent set of $G^{(\tau)}$ is an independent set of each $G_j$ since $G_j\subseteq G^{(\tau)}$, and can therefore be expressed as $\bigcap_{j=1}^k I_j$ with $I_j$ the same set for all $j$.
Thus the collection of all such intersections, after discarding subsets, coincides with $\mathcal{I}(G^{(\tau)})$, the set of maximal independent sets of $G^{(\tau)}$.

Consequently,
\begin{align}
  \delta_{\tau}(\{S_i\}) 
  &= \min_{w} \max_{v \in \stab(G^{(\tau)})} \sum\nolimits_i w_i v_i \nonumber\\
  &= 1/\chi_f(G^{(\tau)}),
\end{align}
where the last equality follows from the same argument as in Sec.~\ref{sec:B}, i.e., the minimum over $w$ of the weighted independence number equals the reciprocal fractional chromatic number.
When only some of the $G_j$ are $\hbar$-perfect, the same reasoning holds provided the induced subgraph of each imperfect $G_j$ on any maximal independent set of the perfect components is also $\hbar$-perfect.

\noindent\textit{Upper bound.}\enspace
Relax each $\be(G_j)$ to its outer approximation $\th(G_j)$, the theta body~\cite{Xu2024Apr}.
By definition, a vector $t_j\in\th(G_j)$ admits an orthogonal representation $\{|v_{j,i}\rangle\}$ of $G_j$, i.e., $\langle v_{j,i}|v_{j,i'}\rangle=0$ whenever $(i,i')$ is an edge of $G_j$, and a handle state $|s_j\rangle$ such that $t_{j,i}=|\langle v_{j,i}|s_j\rangle|^2$.

For any $t_j\in\th(G_j)$ ($j=1,\dots,k$), the tensor-product vectors
\begin{equation}
  |w_i\rangle := |v_{1,i}\rangle \otimes |v_{2,i}\rangle \otimes \cdots \otimes |v_{k,i}\rangle
\end{equation}
form an orthogonal representation of $G^{(\tau)}$: if $(i,i')$ is an edge of $G^{(\tau)}$, then it is an edge of some $G_j$, so $\langle v_{j,i}|v_{j,i'}\rangle=0$, which forces $\langle w_i|w_{i'}\rangle = \prod_{j=1}^k \langle v_{j,i}|v_{j,i'}\rangle = 0$.
Taking the handle state $|s\rangle := |s_1\rangle\otimes\cdots\otimes|s_k\rangle$, we obtain
\begin{equation}
  |\langle w_i|s\rangle|^2 = \prod_{j=1}^k |\langle v_{j,i}|s_j\rangle|^2 = \prod_{j=1}^k t_{j,i},
\end{equation}
which shows that the entrywise product $\bigodot_{j=1}^k t_j$ belongs to $\th(G^{(\tau)})$.
Hence
\begin{equation}
  \delta_{\tau}(\{S_i\}) 
  \le \min_{w} \max_{s \in \th(G^{(\tau)})} \sum\nolimits_i w_i s_i 
  = 1/\vartheta(\bar{G}^{(\tau)}),
\end{equation}
where the last equality uses the duality relation: $\min_{w} \max_{s\in\th(G)} \sum_i w_i s_i = 1/\vartheta(\bar{G})$ (see Sec.~\ref{sec:B}).

\noindent\textit{Tightness.}\enspace
Combining the two bounds, we have
\begin{equation}
  \delta_{\tau}(\{S_i\}) \in [\,1/\chi_f(G^{(\tau)}),\; 1/\vartheta(\bar{G}^{(\tau)})\,].
\end{equation}
The lower bound is exact under either of the following conditions.
\begin{enumerate}
  \item[(i)] All $G_j$ are $\hbar$-perfect. Then we have  $\be(G_j) = \stab(G_j)$, and the multilinear vertex argument above gives $\delta_{\tau}(\{S_i\}) = 1/\chi_f(G^{(\tau)})$.
  \item[(ii)] $G^{(\tau)}$ is perfect. For a perfect graph, $\vartheta(\bar{G}^{(\tau)}) = \alpha(\bar{G}^{(\tau)}) = \omega(G^{(\tau)})$, and the fractional chromatic number equals the clique number, $\chi_f(G^{(\tau)}) = \omega(G^{(\tau)})$. Hence $\chi_f(G^{(\tau)}) = \vartheta(\bar{G}^{(\tau)})$. So the upper and lower bounds coincide and $\delta_{\tau}(\{S_i\}) = 1/\chi_f(G^{(\tau)})$.
\end{enumerate}
\end{proof}

\section{Construction of measurement strategies}
\label{sec:D}
\subsection{Measurement strategy for $\hbar$-perfect cases}

In the case of $\hbar$-perfect frustration graphs, we can circumvent the computational difficulty for optimal ensembles by constructing an optimal measurement strategy directly.
When global measurements are available, measurement strategies with the sample complexity parameter to be the corresponding fractional chromatic number exist already.
For the block-local scenario, let us take the case for a given bipartition as an example with the Pauli observables $\{S_i:=T_i\otimes Q_i\}$, where 
$\delta(\{S_i\}) = 1/\chi_f(G_1\cup G_2)$ with the two $\hbar$-perfect graphs $G_1$ and $G_2$ to be the frustration graphs of $\{T_i\}$ and $\{Q_i\}$, respectively.

Here we employ the following formulation of the fractional chromatic number with $G=G_1\cup G_2$:
\begin{align}\label{eq:chif}
  \chi_f(G) =& \min \sum_{I\in {\cal I}} x_I\\
  \rm{s.t.}\ & \sum_{I \ni i} x_I \ge 1, \,\forall i=1,\ldots, n,\\
        & x_I\ge 0, \,\forall I \in {\cal I},
\end{align}
where $n$ is the number of nodes in $G$, i.e. the number of Pauli observables in $\{S_i\}$, and ${\cal I}$ is the set of all maximal independent sets of $G$, where each element is simultaneously an independent set of $G_1$ and $G_2$.

Assume that $(x_I \mid I\in{\cal I})$ is an optimal solution to the linear program in Eq.~\eqref{eq:chif}. For each $I$, we introduce the projective measurements
\begin{align}
  {\rm PVM}(\{T_i\}_{i\in I}) &= \Big\{ \prod\nolimits_{i\in I} \frac{\id + c_i T_i}{2} | c_i = \pm 1, \forall i\in I\Big\},\\
  {\rm PVM}(\{Q_i\}_{i\in I}) &= \Big\{ \prod\nolimits_{i\in I} \frac{\id + c_i Q_i}{2} | c_i = \pm 1, \forall i\in I\Big\},
\end{align}
whose combination is a projective measurement on the whole system, denoted as ${\cal M}_I$.
By definition, the measurement ${\cal M}_I$ is a fine-graining of the projective measurement $\{(\id+S_i)/2, (\id-S_i)/2\}$ for all $i \in I$.

Then the protocol is that we implement the measurement ${\cal M}_I$ with probability $x_I/\chi_f(G_1\cup G_2)$ on the samples of an unknown state, and estimate the mean value of $S_i$ from the measurement statistics for $i\in I$.
Consequently, the variance of the estimation for $S_i$ is
\begin{equation}
  \sigma_i^2/\frac{N \sum_{I\ni i} x_I }{\chi_f(G_1\cup G_2)} \le \frac{\sigma_i^2 \chi_f(G_1\cup G_2)}{N},
\end{equation}
where $\sigma_i^2 \le 1$ is the ideal variance of $S_i$ for one round of measurements and $N$ is the total number of samples.
Thus, $N=\chi_f(G_1\cup G_2)/\epsilon^2$ samples are enough for the variance to be no more than $\epsilon^2$, which implies that this strategy is optimal. Such a strategy can be directly generalized to the situation of multipartite partitions as summarized in the main text.

\subsection{Multiplicative weight update algorithm for lower bounds of $\delta$ and optimal strategy}
\label{ssec:see-saw}
Here we take the case of global measurements for the illustration and  reformulate the sample complexity parameter, which is the starting point of the see-saw algorithm.
Let $\{S_1,\dots,S_m\}$ be a set of $n$-qubit Pauli strings. For an unknown quantum state $\sigma$, the goal of Pauli shadow tomography is to estimate all expectations $\tr(S_i\sigma)$ within error $\epsilon$ by using as few copies of $\sigma$ as possible. The optimal sample complexity is governed by the parameter~\cite{Chen2024Optimal}
\begin{align}
  \delta(\{S_i\}) &:= \min_{w\in{\cal D}_m}\; \max_{\sigma} \sum_{i=1}^m w_i \tr(S_i\sigma)^2\nonumber\\
  &= \min_{w\in{\cal D}_m}\; \max_{\rho,\sigma} \sum_{i=1}^m w_i \tr(S_i\rho)\tr(S_i\sigma),
\end{align}
where ${\cal D}_m$ is the set of probability distribution corresponding to a simplex over $\{S_1,\dots,S_m\}$, and the equality in the second line is the reformulation. 

\subsubsection{Oracle: See-saw optimisation for a fixed weight vector}
We firstly introduce the see-saw method to solve the inner maximization for a fixed probability vector $w$ as an oracle, which is just the weighted beta number of the corresponding frustration graph~\cite{Xu2024Apr},
\begin{equation}
\beta(G,w) = \max_{\rho,\sigma} \sum_i w_i \tr(S_i\rho)\tr(S_i\sigma).
\end{equation}
The steps are as following:
\begin{enumerate}
\item {Initialisation:} choose a random pure state $\rho$.
\item {Update $\sigma$:} let $M_1 = \sum_i w_i \tr(S_i\rho)\,S_i$, and set $\sigma$ to the pure state $|\psi\rangle\langle\psi|$ where $|\psi\rangle$ is the eigenvector of $M_1$ with largest eigenvalue.
\item {Update $\rho$:} let $M_2 = \sum_i w_i \tr(S_i\sigma)\,S_i$, and set $\rho$ to the maximal eigenstate of $M_2$.
\item Repeat steps~2--3 until the objective converges, e.g., the difference between two iterations is less than $10^{-5}$.
\end{enumerate}
Because each step maximises a linear functional over pure states, the objective is non‑decreasing and converges to a local optimum. 
In practice, we can run this algorithm multiple times and choose the best one. Another approach is to extract a state for initialization from the SDP hierarchy based on either SPO~\cite{xu2025simultaneous} or the quantum de Finetti theorem~\cite{Xu2024Apr}.
We take a shallow level of those hierarchies and map the optimal solution to a feasible quantum state.
When the lower bound from this see-saw method coincides with the upper bounds from SDP hierarchies, we obtain the global optimum.

\subsubsection{Algorithm: Multiplicative weight update}
To solve the outer minimisation over $w$, we employ the multiplicative weights update (MWU) method~\cite{Arora2012Multiplicative,Freund1999Adaptive}. 
For a given $T$,  set the learning rate $\eta = \sqrt{(8\ln m)/T}$.
The algorithm proceeds as follows:
\begin{itemize}
\item Start with uniform weights $w^{(0)}_i = 1/m$.
\item For $t = 1,\dots,T$:
  \begin{enumerate}
  \item Using the oracle to compute optimal states $(\rho_t,\sigma_t)$ for the current weight vector $w^{(t-1)}$.
  \item Define $c_i = \tr(S_i\rho_t)\tr(S_i\sigma_t)$.
  \item Update $w^{(t)}_i = w^{(t-1)}_i \exp(-\eta\,c_i)$ and renormalize it as a probability distribution.
  \end{enumerate}
\end{itemize}
 After $T$ iterations we collect the ensemble $\mathcal{E} = \{(\rho_t,\sigma_t)\}_{t=1}^T$.

Standard regret analysis for MWU yields the following bound (see, e.g.~\cite{Fernandez2025Sep} for a similar statistical context). Let $\overline{w} = \frac{1}{T}\sum_{t=1}^T w^{(t)}$ be the averaged weight vector. Then
\begin{equation}
\max_{\rho,\sigma}\sum_i \overline{w}_i \tr(S_i\rho)\tr(S_i\sigma)
\;\leq\; \delta(\{S_i\}) \;+\; \sqrt{{2\ln m}/{T}}.
\end{equation}
Thus, after $T = O(\epsilon^{-2}\ln m)$ iterations, the duality gap is at most $\epsilon$. 
Provided that the oracle for the inner maximisation is solved exactly, this bound ensures the convergence speed for the estimation of the sample complexity parameter $\delta$.
In practice, we terminate it when the see-saw iteration reaches a tolerance  which is negligible compared to the MWU error, e.g., $10^{-5}$.

To leverage the ensemble $\{(\rho_t,\sigma_t)\}_{t=1}^T$ further, we can solve a linear program  that finds the best convex combination:
\begin{align}
 & \max_{a_t,b_t} \mu\nonumber\\
\rm{s.t.}\ & M = \sum\nolimits_{t=1}^T \bigl(a_t\,\rho_t\otimes\rho_t + b_t\,\sigma_t\otimes\sigma_t\bigr),\nonumber\\
      & \tr(M(S_i\otimes S_i)) \ge \mu,\nonumber\\
      & a_t,b_t\ge 0,\;\sum\nolimits_t (a_t+b_t)=1.
\end{align} 
This results in a better weight vector than $\bar{w}$ in the sense that the measurement strategy of the corresponding ensemble has a lower sample complexity.
\subsubsection{Results on graphs with no more than 9 vertices}\label{sssec:smmwu}
Since we have proven that the Clifford measurement strategy is already optimal in the case that the frustration graph is $\hbar$-perfect, we only need to focus on the $\hbar$-imperfect ones.
All such $\hbar$-imperfect graphs with no more than $9$ vertices are already determined in a companion work~\cite{Xu2024Apr}.
Among all graphs with no more than $7$ vertices, antiheptagon $\bar{C}_7$ is the only case and the ensemble for its optimal measurement strategy does not require the MWU method. Since the antiheptagon $\bar{C}_7$ is vertex-transitive, we can just use the oracle to find an optimal state for $\beta(\bar{C}_7)$. Applying the unitaries that correspond to each permutation in the automorphism group of $\bar{C}_7$, we can obtain the desired ensemble for the sample complexity parameter.

For all graphs with $8$ and $9$ vertices, we firstly employ the SDP hierarchy based on SPO to obtain an upper bound of the sample complexity parameter $\delta$, and check such an upper bound against the lower bound $1/\chi_f(G)$.
If they coincide with each other, then this means that the strategy with Clifford measurements is already optimal, even though the corresponding frustration graph is $\hbar$-imperfect. Then we do not need to consider such a case.
As it turns out, only $9$ graphs with $8$ vertices and $295$ ones with $9$ vertices are left.
Then we employ the MWU method to calculate a tighter lower bound together with the ensemble to construct the corresponding measurement strategy.

Across all the left ones,
the gap  between the upper bound from the SDP hierarchy and
the lower bound from MWU method has a mean of $2.28\times10^{-5}$ and a median of
$1.62\times10^{-5}$.
The maximum gap is $2.88\times10^{-4}$, attained on the graph 
\texttt{HCQf`\~{}O\{} in graph6 format with $9$ vertices, where the upper bound is $0.2991193922$ and the lower bound is $0.2988311453$.
For all graphs with $8$ vertices the gap is below $5\times10^{-5}$, and for
$96.6\%$ of the $295$ graphs with $9$ vertices the gap is below $10^{-4}$.
The tightest lower bounds essentially match the upper bounds, e.g., the graphs
\texttt{HUzrv\~{}\}} and \texttt{HCrUqz\~{}} have gaps below $10^{-7}$.
These results demonstrate that the bounds obtained by the MWU method are close to being tight for most tested graphs, and our method explicitly constructs near‑optimal measurement ensembles that achieve the theoretical sample complexity. The code for reproducing the results is available at \url{https://github.com/wangjie212/BetaNumber}.

\section{Estimating Hamiltonian energy}
\label{sec:E}

A typical application of shadow tomography is to estimate the energy of an unknown state~\cite{Huang2020Oct,Hadfield2022May,hadfield2021adaptive,korhonen2026improving}. Given its central importance to quantum variational algorithms~\cite{Peruzzo2014Jul,Kandala2017Sep} and quantum chemistry~\cite{OMalley2016Jul,Hempel2018Jul}, we utilize this task as a benchmark for the different measurement strategies examined here.
For a given Hamiltonian $H = \sum_i c_i S_i$ expressed in terms of Pauli strings and
a state $\rho$, the aim is to minimize the sample complexity in estimating the energy
$\tr(H\rho)$.
As discussed in the main text, each measurement scenario characterized by a partition $\tau$ corresponds to a frustration graph $G=G^{(\tau)}$ encoding the structure of the scenario.
Our approach requires only a linear program whose size equals the number of Pauli
strings in the Hamiltonian, and the main steps are:
\begin{enumerate}
  \item For $H=\sum_{i=1}^n c_i S_i$ and a given partition $\tau$, generate the graph $G$
        and compute $\tilde{c}_i = c_i^{2/3}/\sum_j c_j^{2/3}$.
  \item Solve the linear program in Eq.~\eqref{eq:energyb1} below to obtain the
        optimal $\{t_I\}$.
  \item With these $\{t_I\}$, implement a sharp joint measurement of the commuting Pauli
        strings in each independent set $I$ with probability $t_I$.
\end{enumerate}
The detailed derivation of the variance bounds and the linear program are as follows.

Let ${\cal I}$ denote the set of all maximal independent sets of $G$.
Assume that we perform $t_I$ rounds of the sharp joint measurement of the commuting Pauli observables in set $I$.
Then the Pauli string $S_i$ has been estimated sharply $T_i$ times, where $T_i = \sum_{I\ni i} t_I$.
Denote $\bar{S}_{I,i}$ the sample mean for $S_i$ in context $I$, and $\hat{S}_i = (\sum_{I\ni i} \bar{S}_{I,i})/T_i$ the estimator of $S_i$.
The estimator of $H = \sum_i c_i S_i$ is constructed as $\hat{H} = \sum_i c_i \hat{S}_i$.
Let $T_{i,j} = \sum_{I\ni i,j} t_I \le \min\{T_i,T_j\} \le \sqrt{T_iT_j}$.
Then we have
\begin{align}\label{eq:varbound00}
  \var(\hat{H}) &= \sum_i \frac{c_i^2\sigma_i^2}{T_i} + 2\sum_{i<j} c_ic_j\cov(S_i,S_j) \frac{T_{i,j}}{T_iT_j}\nonumber\\
                &\le \sum_i \frac{c_i^2\sigma_i^2}{T_i} + 2\sum_{i<j} |c_ic_j|\sigma_i\sigma_j \frac{T_{i,j}}{T_iT_j}\nonumber\\
                &\le \sum_i \frac{c_i^2\sigma_i^2}{T_i} + 2\sum_{i<j} |c_ic_j|\sigma_i\sigma_j \frac{1}{\sqrt{T_iT_j}}\nonumber\\
                &= \Big(\sum_i \frac{|c_i|\sigma_i}{\sqrt{T_i}}\Big)^2
                \le \Big(\sum_i \frac{|c_i|}{\sqrt{T_i}}\Big)^2.
\end{align}
However, a direct analysis shows that
\begin{equation}\label{eq:varbound01}
  \sum_i \frac{|c_i|}{\sqrt{T_i}} \le \frac{1}{\sqrt{T}} \Big(\sum_i c_i^{2/3}\Big)^{3/2},
\end{equation}
which is attained at $T_i = T\tilde{c}_i$, where $T = \sum_i T_i, \tilde{c}_i = c_i^{2/3}/\sum_j c_j^{2/3}$.
We then maximize $T$ for a fixed total number of rounds $t = \sum_I t_I$. Without loss of generality, we normalize to $t=1$. The optimization then reads
\begin{align}\label{eq:energyb1}
  \max \,\,&T\nonumber\\
  \rm{s.t.}\,\, & \sum_{I\ni i}t_I \ge T \tilde{c}_i, \,\forall i=1,\ldots,n,\nonumber\\
       &t_I \ge 0, \,\forall I \in {\cal I},\nonumber\\
       &\sum_I t_I = 1,
\end{align}
whose optimal value is $1/\chi_f(G,\tilde{c})$.
Combining with Eqs.~(\ref{eq:varbound00},\ref{eq:varbound01}) yields
\begin{equation}\label{eq:varbound_sm}
  \var(\hat{H}) \le \chi_f(G, \tilde{c}) \Big(\sum_i c_i^{2/3}\Big)^{3}.
\end{equation}
Different measurement limitations affect the graph $G$.
If ${\cal I}$ is only a subset of all maximal independent sets, the optimal value is smaller than $1/\chi_f(G,\tilde{c})$, and we still obtain an upper bound on the right-hand side of Eq.~\eqref{eq:varbound_sm}. This yields a more practical method for the general case, circumventing the difficulty of enumerating all maximal independent sets.

Given the optimal solution $\{t_I\}$ of the linear program~\eqref{eq:energyb1}, we can estimate a second, typically tighter, upper bound on $\var(\hat{H})$.
First, note that the matrix ${\cal T}$ with $(i,j)$-th entry $T_{i,j}$ is positive semidefinite. By definition, ${\cal T} = \sum_{I}{\cal T}_I$, where the $(i,j)$-th entry of ${\cal T}_I$ equals $t_I$ if $i,j\in I$ and $0$ otherwise. The positive semidefiniteness of each ${\cal T}_I$ implies positive semidefiniteness of ${\cal T}$. Let $\tilde{{\cal T}} = D {\cal T} D^\dagger$ with $D = {\rm diag}(c_1/T_1,\ldots,c_n/T_n)$; then $\tilde{{\cal T}}$ is also positive semidefinite.
Consequently, by reformulating, for any state $\rho$,
\begin{align}\label{eq:varbound2}
  \var(\hat{H})  =& \sum_{i,j} \frac{c_ic_j T_{i,j}}{T_iT_j} [\tr(S_iS_j \rho ) - \tr(S_i\rho)\tr(S_j\rho)] \nonumber\\
  \le& \sum_{i,j} \frac{c_ic_j T_{i,j}}{T_iT_j} \tr(S_iS_j \rho ) \nonumber\\
  \le &\lambda_{\max}\Big( \sum_{i,j} \frac{c_ic_j T_{i,j}}{T_iT_j} S_iS_j  \Big),
\end{align}
where the first inequality follows because the matrix with the $(i,j)$-th entry $\tr(S_i\rho)\tr(S_j\rho)$ is positive semidefinite.

The numerical results for selected molecules are collected in Table~\ref{tab:atom_variance}. We compare the bounds from Eq.~\eqref{eq:varbound2} with global, single-qubit, and two-qubit Clifford measurements, and the bounds from joint measurement strategies~\cite{McNulty2023Mar} with and without parameter optimization.
In all cases, Eq.~\eqref{eq:varbound2} consistently outperforms the optimized joint measurement strategy, and two-qubit Clifford measurements yield tighter bounds than single-qubit ones due to the additional entanglement.
The unbiased joint measurement values serve as a baseline.
Interestingly, global Clifford measurements for H$_2$ give worse bounds than local Clifford measurements. This discrepancy may arise from relaxations in the derivation, leaving room for more efficient strategies both in achievable performance and practical construction.

\begin{table}[htbp]
\centering
\caption{Variance bounds for selected active-space configurations $(e,o)$, where $e$ is the number of electrons and $o$ the number of orbitals.
``S'' , ``S1'' and ``S2'' are the bounds from Eq.~\eqref{eq:varbound2} with global, single-qubit and two-qubit measurements, respectively;
``JM'' and ``$\overline{\rm JM}$'' refer to the joint measurement approach~\cite{McNulty2023Mar} with and without parameter optimization.
Smaller values indicate better accuracy.}
\label{tab:atom_variance}
\begin{tabular}{l l l l l l l l}
\hline\hline
Molecule $(e,o)$ & Mapper & S & S1 & S2 & JM & $\overline{\rm JM}$ \\
\hline
\multirow{3}{*}{H$_2$ (2, 2)}
  & Parity    &   1.4034 &   1.3658 &   1.3658 &   1.6207 &   13.277 \\
  & JW        &   1.4034 &   1.5366 &   1.3658 &   4.0289 &   4.4739 \\
  & BK        &   1.4034 &   1.3658 &   1.3658 &   1.6207 &   17.141 \\
\hline
\multirow{3}{*}{LiH (2, 3)}
  & Parity    &   3.1125 &   3.3682 &   3.3085 &   4.4227 &   17.306 \\
  & JW        &   3.1125 &   3.5946 &   3.4567 &   4.1553 &   5.3567 \\
  & BK        &   3.1125 &   3.6573 &   3.4441 &   5.6131 &   15.263 \\
\hline
\multirow{3}{*}{LiH (2, 4)}
  & Parity    &   18.260 &   20.140 &   19.639 &   28.245 &   66.811 \\
  & JW        &   18.260 &   19.861 &   18.942 &   20.662 &   25.568 \\
  & BK        &   18.260 &   20.168 &   19.639 &   28.371 &   100.58 \\
\hline
\multirow{3}{*}{BeH$_2$ (4, 3)}
  & Parity    &   8.6232 &   9.3248 &   9.0086 &   12.893 &   39.100 \\
  & JW        &   8.6232 &   10.000 &   9.2799 &   13.081 &   14.476 \\
  & BK        &   8.6232 &   9.7532 &   9.3901 &   17.806 &   38.316 \\
\hline
\multirow{3}{*}{BeH$_2$ (4, 4)}
  & Parity    &   8.6880 &   10.011 &   9.5400 &   21.245 &   57.469 \\
  & JW        &   8.6880 &   10.488 &   9.5469 &   14.974 &   19.844 \\
  & BK        &   8.6880 &   10.016 &   9.5400 &   21.163 &   80.265 \\
\hline
\multirow{3}{*}{BeH$_2$ (4, 5)}
  & Parity    &   46.316 &   62.024 &   54.794 &   110.00 &   248.92 \\
  & JW        &   46.316 &   49.695 &   48.609 &   58.370 &   173.42 \\
  & BK        &   46.316 &   59.008 &   53.993 &   137.41 &   256.29 \\
\hline
\multirow{3}{*}{H$_2$O (2, 3)}
  & Parity    &   20.902 &   21.665 &   21.318 &   26.316 &   101.78 \\
  & JW        &   20.902 &   22.992 &   21.851 &   27.324 &   35.224 \\
  & BK        &   20.902 &   22.564 &   22.243 &   32.851 &   93.983 \\
\hline
\multirow{3}{*}{H$_2$O (4, 4)}
  & Parity    &   22.839 &   30.838 &   25.620 &   55.572 &   157.77 \\
  & JW        &   22.839 &   30.080 &   28.592 &   36.145 &   53.964 \\
  & BK        &   22.839 &   32.983 &   27.423 &   57.986 &   241.24 \\
\hline\hline
\end{tabular}
\end{table}

\newpage
\bibliography{ref}

\end{document}